\newtheorem{theorem}{Theorem}
\newtheorem{property}{Property}
\newtheorem{definition}{Definition}
\newtheorem{lemma}{Lemma}
\newcommand{\OPT}{\mathrm{OPT}}
\newcommand{\SOL}{\mathrm{SOL}}
\author{Eric Angel\affiliationmark{1}
  \and Evripidis Bampis\affiliationmark{2}
  \and Bruno Escoffier\affiliationmark{2}
  \and Michael Lampis\affiliationmark{3}}
\title[Parameterized Power Vertex Cover]{Parameterized Power Vertex Cover\thanks{This is an extended version of a work presented at the 42nd International Workshop on Graph-Theoretic Concepts in Computer Science (WG 2016) \cite{DBLP:conf/wg/AngelBEL16}.}}
\affiliation{
  IBISC, Universit\'{e} Evry Val d'Essone\\
  Sorbonne Universit\'{e}, CNRS, LIP6 UMR 7606\\
  CNRS UMR 7243 and Universit\'e Paris Dauphine}
\keywords{Power vertex cover, Parameterized complexity, Treewidth, Parameterized approximation}
\begin{document}
\publicationdetails{20}{2018}{2}{10}{4256}
\maketitle

\begin{abstract}

We study a recently introduced generalization  of the \textsc{Vertex Cover}
{\sc (VC)} problem, called \textsc{Power Vertex Cover} \textsc{(PVC)}. In this
problem, each edge of the input graph is supplied with a positive integer
\emph{demand}. A solution is an assignment of (power) values to the vertices,
so that for each edge one of its endpoints has value as high as the demand, and
the total sum of power values assigned is minimized.

We investigate how this generalization affects the parameterized complexity of \textsc{Vertex
Cover}.  On the
positive side, when parameterized by the value of the optimal $P$, we give an
$O^*(1.274^P)$-time branching algorithm ($O^*$ is used to hide factors polynomial in the input size), and also an $O^*(1.325^P)$-time algorithm for
the more general asymmetric case of the problem, where the demand of each edge
may differ for its two endpoints. When the parameter is the number of vertices
$k$ that receive positive value, we give $O^*(1.619^k)$ and $O^*(k^k)$-time
algorithms for the symmetric and asymmetric cases respectively, as well as a
simple quadratic kernel for the asymmetric case.

We also show that \textsc{PVC} becomes significantly harder than classical
\textsc{VC} when parameterized by the graph's treewidth $t$. More specifically,
we prove that unless the ETH is false, there is no $n^{o(t)}$-time algorithm for
\textsc{PVC}. We give a method to overcome this hardness by designing an FPT
\emph{approximation scheme} which gives a $(1+\epsilon)$-approximation to the
optimal solution in time FPT in parameters $t$ and $1/\epsilon$.

\end{abstract}

\section{Introduction}

In the classical {\sc Vertex Cover (VC)}  problem, we are given a graph
$G=(V,E)$ and we aim to find a minimum cardinality cover of the edges, i.e. a
subset of the vertices $C \subseteq V$ such that for every edge $e \in E$, at
least one of its endpoints belongs to $C$. {\sc Vertex Cover} is one of the
most extensively studied NP-hard problems in both approximation and
parameterized algorithms, see \cite{WS11,N06}.

In this paper, we study a natural generalization of the  {\sc VC} problem,
which we call \textsc{Power Vertex Cover (PVC)}. In this generalization,  we
are given an edge-weighted graph $G=(V,E)$ and we are asked to assign (power)
values to its vertices. We say that an edge $e$ is covered if at least one of
its endpoints is assigned a value greater than or equal to the weight of $e$.
The goal is to determine a valuation such that all edges are covered and the
sum of all values assigned is minimized.  Clearly, if all edge weights are
equal to 1, then this problem coincides with \textsc{VC}.

\textsc{Power Vertex Cover} was recently introduced in \cite{Angel2015},
motivated by practical applications in sensor networks (hence the term
``power''). The main question posed in \cite{Angel2015} was whether this more
general problem is harder to approximate than \textsc{VC}. It was then shown
that \textsc{PVC} retains enough of the desirable structure of \textsc{VC} to
admit a similar 2-approximation algorithm, even for the more general case where
the power needed to cover the edge $(u,v)$ is not the same for $u$ and $v$ (a
case referred to as \textsc{Directed Power Vertex Cover (DPVC)}).

The goal of this paper is to pose a similar question in the context of
parameterized complexity: is it possible to leverage known FPT results for
\textsc{VC} to obtain FPT algorithms for this more general version? We offer a
number of both positive and negative results. Specifically:

\begin{itemize}
\item When the parameter is the value of the optimal solution $P$ (and all weights
  are positive integers), we show an $O^*(1.274^P)$-time branching algorithm for
\textsc{PVC}, and an $O^*(1.325^P)$-time algorithm for \textsc{DPVC}. Thus, in this
case, the two problems behave similarly to classical \textsc{VC}.

\item When the parameter is the \emph{cardinality} $k$ of the optimal solution,
  that is, the number of vertices to be assigned non-zero values, we show
$O^*(1.619^k)$ and $O^*(k^k)$-time algorithms for \textsc{PVC} and \textsc{DPVC}
respectively, as well as a simple quadratic (vertex) kernel for \textsc{DPVC}, similar
to the classical Buss kernel for \textsc{VC}. This raises the question of
whether a kernel of order \emph{linear} in $k$ can be obtained. We give some
negative evidence in this direction, by showing that an LP-based approach is
very unlikely to succeed.  More strongly, we show that, given an optimal
\emph{fractional} solution to \textsc{PVC} which assigns value 0 to a vertex,
it is NP-hard to decide if an optimal solution exists that does the same.

\item When the parameter is the treewidth $t$ of the input graph, we show through
  an FPT reduction from \textsc{Clique} that there is no $n^{o(t)}$ algorithm
for \textsc{PVC} unless the ETH is false. This is essentially tight, since we
also supply an $O^*((\Delta+1)^t)$-time algorithm, where $\Delta$ is the maximum degree of the graph, and is in stark contrast to
\textsc{VC}, which admits an $O^*(2^t)$-time algorithm. We complement this hardness
result with an FPT approximation scheme, that is, an algorithm which, for any
$\epsilon>0$ returns a $(1+\epsilon)$-approximate solution while running in
time FPT in $t$ and $\frac{1}{\epsilon}$.  Specifically, our algorithm runs in
time $\left( O(\frac{\log n}{\epsilon})\right)^t n^{O(1)}$.  We also generalize the $O^*(2^t)$-time algorithm for \textsc{VC} to an $O^*((M+1)^t))$-time algorithm for \textsc{PVC} where $M$ is the maximum edge weight ($M=1$ for \textsc{VC}).  
\end{itemize}

Our results thus indicate that \textsc{PVC} occupies a very interesting spot in
terms of its parameterized complexity. On the one hand, \textsc{PVC} carries
over many of the desirable algorithmic properties of \textsc{VC}: branching
algorithms and simple kernelization algorithms can be directly applied. On the
other, this problem seems to be considerably harder in several (sometimes
surprising) respects. In particular, neither the standard treewidth-based DP
techniques, nor the Nemhauser-Trotter theorem can be applied to obtain results
comparable to those for \textsc{VC}. In fact, in the latter case, the existence
of edge weights turns a trivial problem (all vertices with fractional optimal
value 0 are placed in the independent set) to an NP-hard one. Yet, despite
its added hardness, \textsc{PVC} in fact admits an FPT approximation scheme, a
property that is at the moment known for only a handful of other W-hard
problems. Because of all these, we view the results of this paper as a first
step towards a deeper understanding of a natural generalization of \textsc{VC}
that merits further investigation.

\subsubsection*{Previous work}

As mentioned, \textsc{PVC} and \textsc{DPVC} were introduced in
\cite{Angel2015}, where 2-approximation algorithms were presented  for general
graphs and it was proved that, like \textsc{VC}, the problem can be solved in
polynomial time for bipartite graphs.

\textsc{Vertex Cover} is one of the most studied problems in FPT algorithms,
and the complexity of the fastest algorithm as a function of $k$ has led to a
long ``race'' of improving results, see 
\cite{Nied03,Chen05} and references therein.  The current best
result  is a $O^*(1.274^k)$-time polynomial-space algorithm. Another direction
of intense interest has been kernelization algorithms for \textsc{VC}, with the
current best being a kernel with (slightly less than) $2k$ vertices, see
\cite{Lampis11,Fel03,Chleb08}. Because of the importance of this problem,
numerous variations and generalizations have also been thoroughly investigated.
These include (among others): \textsc{Weighted VC} (where each vertex has a
cost), see \cite{Nied03}, \textsc{Connected VC} (where the solution is required to
be connected), see \cite{Cygan12,MolleRR08}, \textsc{Partial VC} (where the solution
size is fixed and we seek to maximize the number of covered edges), see
\cite{GuoNW07,Marx08}, and \textsc{Capacitated VC} (where each vertex has a
capacity of edges it can dominate), see \cite{DomLSV08,GuoNW07}. Of these, all
except \textsc{Partial VC} are FPT when parameterized by $k$, while all except
\textsc{Capacitated VC} are FPT when parameterized by the input graph's
treewidth $t$. \textsc{Partial VC} is known to admit an FPT approximation
scheme parameterized by $k$, see \cite{Marx08}, while \textsc{Capacitated VC} admits
a \emph{bi-criteria} FPT approximation scheme parameterized by $t$, see
\cite{Lampis14}, that is, an algorithm that returns a solution that has optimal
size, but may violate some capacity constraints by a factor $(1+\epsilon)$.

In view of the above, and the results of this paper, we observe that
\textsc{PVC} displays a different behavior than most \textsc{VC} variants, with
\textsc{Capacitated VC} being the most similar. Note though, that for
\textsc{PVC} we are able to obtain a (much simpler)
$(1+\epsilon)$-approximation for the problem, as opposed to the bi-criteria
approximation known for \textsc{Capacitated VC}. This is a consequence of a
``smoothness'' property displayed by one problem and not the other, namely,
that any solution that slightly violates the feasibility constraints of
\textsc{PVC} can be transformed into a feasible solution with almost the same
value. This property separates the two problems, motivating the further study
of \textsc{PVC}.

\section{Preliminaries}

We use standard graph theory terminology. We denote by $n$ the order of a
graph, by $N(u)$  the set of neighbors of a vertex $u$, by $d(u)=|N(u)|$ its degree, and by $\Delta$ the maximum degree of the graph. We also use standard
parameterized complexity terminology, and refer the reader to related textbook
\cite{N06} for the definitions of notions such as FPT and kernel.

In the \textsc{DPVC} problem we are given a graph $G=(V,E)$ and for each edge
$(u,v)\in E$ two positive integer values $w_{u,v}$ and $w_{v,u}$. A feasible
solution is a function that assigns to each $v\in V$ a value $p_v$ such that
for all edges we have either $p_u\ge w_{u,v}$ or $p_v\ge w_{v,u}$. If for all
edges we have $w_{u,v}=w_{v,u}$ we say that we have an instance of
\textsc{PVC}.

Both of these problems generalize \textsc{Vertex Cover}, which is the case
where $w_{u,v}=1$ for all edges $(u,v)\in E$. In fact, there are simple cases
where the problems are considerably harder.

\begin{theorem}

\label{thm:npcliques} \textsc{PVC} is NP-hard in complete graphs, even if the
weights are restricted to $\{1,2\}$. It is even APX-hard in this class of
graphs, as hard to approximate as \textsc{VC}.

\end{theorem}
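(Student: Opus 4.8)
The plan is to give an approximation-preserving reduction from \textsc{Vertex Cover} that stays inside the class of complete graphs with edge weights in $\{1,2\}$. Given a \textsc{VC} instance $G=(V,E)$ with $|V|=n$, I would build the complete graph $G'$ on vertex set $V$, giving weight $2$ to every edge of $E$ and weight $1$ to every non-edge of $G$. The whole argument then rests on the exact identity
\[\OPT(G') \;=\; \tau(G) + n - 1,\]
where $\OPT(G')$ denotes the optimal \textsc{PVC} value of $G'$ and $\tau(G)$ the minimum vertex cover number of $G$.

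For the $\le$ direction I would take a minimum vertex cover $C$ of $G$ (note $\tau(G)\le n-1$ always), set $p_v=2$ for $v\in C$, choose one vertex $v_0\in V\setminus C$ and set $p_{v_0}=0$, and set $p_v=1$ for every other vertex. Feasibility is immediate: a weight-$2$ edge is an edge of $G$, hence meets $C$; a weight-$1$ edge is a non-edge of $G$, and if it avoids $C$ then both endpoints lie in $V\setminus C$, so at least one of them is not $v_0$ and has value $1$. The value of this assignment is $2|C|+(n-|C|-1)=\tau(G)+n-1$. For the $\ge$ direction, the key observation is that in \emph{any} feasible solution of a complete graph all of whose weights are at least $1$, at most one vertex can receive value $0$ --- two such vertices would leave the (present!) edge between them uncovered. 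Since replacing each $p_v$ by $\min(p_v,2)$ preserves feasibility (all weights are $\le 2$) and does not increase the value, I may assume $p_v\in\{0,1,2\}$; letting $A=\{v:p_v=2\}$ and $B=\{v:p_v=1\}$, the observation gives $|A|+|B|\ge n-1$, so the value equals $2|A|+|B|=|A|+(|A|+|B|)\ge |A|+n-1$. Finally $A$ is a vertex cover of $G$, because every weight-$2$ edge needs an endpoint of value $\ge 2$, so the value is at least $\tau(G)+n-1$. This establishes the identity.

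The identity gives NP-hardness at once, since $\tau(G)\le s$ iff $\OPT(G')\le s+n-1$ and the former is NP-hard, and it uses only weights $1$ and $2$ and only complete graphs. For the stronger APX-hardness I would invoke that \textsc{VC} is APX-hard already on graphs of bounded degree, a class on which $\tau(G)=\Theta(n)$, so that on the hard instances $n-1\le\kappa\,\tau(G)$ for an absolute constant $\kappa$. Two \textsc{VC} instances separated by a gap $\rho>1$ in $\tau$ are then mapped by the reduction to \textsc{PVC} instances with optima $\le s+n-1$ and $\ge\rho s+n-1$; as the function $t\mapsto\frac{\rho s+t}{s+t}$ is decreasing for $\rho>1$ and $t=n-1\le\kappa s$, the induced gap is at least $\frac{\rho+\kappa}{1+\kappa}>1$. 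The reduction also lifts solutions: from any feasible \textsc{PVC} solution $p$ of $G'$ the set $\{v:p_v\ge 2\}$ is a vertex cover of $G$, of size at most $\mathrm{val}(p)-(n-1)$ by the computation above, so a $(1+\delta)$-approximation for \textsc{PVC} yields a $(1+(1+\kappa)\delta)$-approximation for \textsc{VC}. Hence \textsc{PVC} on complete graphs with weights in $\{1,2\}$ is APX-hard, inheriting the inapproximability of \textsc{VC}.

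The only genuinely delicate point is the $\ge$ direction of the identity, and within it the observation that on a complete graph at most one vertex can remain at value $0$; combined with the truncation step, this is what pins the optimum down to the clean additive form $\tau(G)+n-1$ rather than something messier, and hence makes the gap survive the reduction. The rest is bookkeeping, though one should check the harmless boundary cases: isolated vertices of $G$ can be deleted before applying the \textsc{VC} hardness results, and if $G$ is itself complete the identity still holds with $\tau(G)=n-1$.
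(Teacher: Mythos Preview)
Your NP-hardness argument is correct and matches the paper's up to a cosmetic choice: the paper adds an extra universal vertex $v_0$ joined to all of $V$ by weight-$1$ edges, yielding $\OPT=\tau(G)+|V|$ on the clique $K_{n+1}$, whereas you work directly on $K_n$ and obtain $\OPT=\tau(G)+n-1$. Both versions hinge on the same key observation, namely that in a complete graph with all weights $\ge 1$ at most one vertex can receive power $0$.

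For the APX-hardness the two proofs genuinely diverge. The paper does not stay inside $\{1,2\}$: it replaces the weight $2$ on the edges of $G$ by a large weight $K$ (e.g.\ $K=n^2$), so that the optimum becomes $K\tau(G)+O(n)$ and the additive $O(n)$ term is asymptotically negligible; this transfers the \emph{exact} inapproximability constant of \textsc{VC}, matching the phrase ``as hard to approximate as \textsc{VC}''. Your route, by contrast, keeps the weights in $\{1,2\}$ and instead exploits that \textsc{VC} is already APX-hard on bounded-degree graphs where $\tau(G)=\Theta(n)$, so that the additive $n-1$ only dilutes the gap by a fixed factor. This is a clean L-reduction and yields APX-hardness \emph{with weights restricted to $\{1,2\}$}, which the paper's argument does not; the price is that the inapproximability bound you inherit is $\frac{\rho+\kappa}{1+\kappa}$ rather than $\rho$, so you establish APX-hardness but not literally the same constant as \textsc{VC}.
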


\begin{proof} The reduction is from VC. Start with an instance $G=(V,E)$ of vertex cover. Put weight 2 on edges of $E$, add missing edges with weight 1. Add a new vertex $v_0$ adjacent to all other vertices with edges of weight 1. Then a vertex cover of size $k$
in the initial graph corresponds to a vertex cover of power $|V|+k$ in the final graph (put power 2 on vertices in the vertex cover, and power 1 to other vertices in $V$). The reverse is also true: in any solution all but one of the vertices must have power at least 1, and if $v_0$ is in a solution it can be replaced by another vertex. So $v_0$ is useless, and a solution of PVC takes all vertices of $V$ with power either 1 or 2; it has power $|V|+k$, where $k$ is the number of vertices with power 2. These vertices must be a vertex cover of $G$. So the $NP$-hardness follows. 

For the approximation hardness, put weights $K$ instead of 2 of edges of $G$. Then a solution of size $k$ in the initial graph corresponds to a solution of power $Kk+n$ in the final graph. Take $K=n^2$ for instance to transfer constant ratios.
\end{proof}

As a consequence of the above, \textsc{PVC} is hard on any class of graphs that
contains cliques, such as interval graphs. In the remainder we focus on classes
that do not contain all cliques, such as graphs of bounded treewidth.

We will use the standard notion of tree decomposition (for an introduction to
this notion see the survey by  \cite{BK08}).  Given a
graph $G(V,E)$ a tree decomposition of $G$ is a tree $T(I,F)$ such that every
node $i\in I$ has associated with it a set $X_i\subseteq V$, called the bag of
$i$.  In addition, the following are satisfied: $\bigcup_{i\in I} X_i = V$; for
all $(u,v)\in E$ there exists $i\in I$ such that $\{u,v\}\subseteq X_i$; and
finally for all $u\in V$ the set $\{ i\in I\ |\ u\in X_i\}$ is a connected
sub-tree of $T$. The width of a tree decomposition is defined as $\max_{i\in I}
|X_i| - 1$. The treewidth of a graph $G$ is the minimum treewidth of a tree
decomposition of $G$.

As is standard, when dealing with problems on graphs of bounded treewidth we
will assume that a ``nice'' tree decomposition of the input graph is supplied
with the input. In a nice tree decomposition the tree $T$ is a rooted binary
tree and each node $i$ of the tree is of one four special types: Leaf nodes,
which contain a single vertex; Join nodes, which have two children containing
the same set of vertices as the node itself; Introduce nodes, which have one
child and contain the same vertices as their child, plus one vertex; and Forget
nodes, which have one child and contain the same vertices as their child, minus
one vertex (see \cite{BK08} for more details).

\section{Parameterizing by treewidth}\label{sec:treewidth}

\subsection{Hardness for Treewidth}

\begin{theorem} \label{thm:twhard}

If there exists an algorithm which, given an instance $G(V,E)$ of {\sc PVC}
with treewidth $t$, computes an optimal solution in time $|V|^{o(t)}$, then the
ETH is false. This result holds even if all weights are polynomially bounded in
$|V|$.

\end{theorem}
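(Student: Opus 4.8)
The plan is to reduce from \textsc{Clique}, which is the canonical source of $n^{o(t)}$-hardness results under ETH (via the standard fact that \textsc{Clique} has no $f(k)n^{o(k)}$ algorithm unless ETH fails). The goal is a polynomial-time reduction that, given a graph $H$ on $m$ vertices and an integer $k$, produces a \textsc{PVC} instance $G$ whose treewidth is $O(k)$ (or $f(k)$), is of polynomial size, has polynomially bounded weights, and such that $H$ has a $k$-clique if and only if the optimal \textsc{PVC} value on $G$ is at most some target $P$. If such a reduction exists, an $n^{o(t)}$ algorithm for \textsc{PVC} would give an $m^{o(k)}$ algorithm for \textsc{Clique}, contradicting ETH.

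The key idea is to exploit the fact, established already in Theorem~\ref{thm:npcliques}, that \textsc{PVC} lets a single high-power vertex "pay once" to cover many edges — this is exactly the non-local behavior that classical \textsc{VC} lacks and that breaks the $O^*(2^t)$ DP. Concretely, I would build $k$ "selector" gadgets, one for each vertex of the sought clique; each selector gadget is a small structure (a vertex or short path) whose optimal power assignment encodes a choice of one vertex of $H$, by using edge weights to encode vertex indices (e.g. a vertex $s_i$ must receive power equal to the index of the chosen clique-vertex, enforced by attaching a pendant path/edges of appropriate weights that make exactly the "index" values locally optimal). Then for each pair $i<j$ I would add a verification gadget (again a single vertex or $O(1)$-size structure) connected to $s_i$ and $s_j$ by weighted edges arranged so that the total extra cost paid at that gadget is minimized precisely when the two chosen indices $a,b$ form an edge of $H$, and is strictly larger otherwise. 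Summing, the global optimum hits the target $P$ iff all $\binom{k}{2}$ chosen pairs are edges, i.e. the $k$ chosen vertices form a clique. Crucially, the treewidth stays $O(k)$: place all $k$ selector vertices in every bag, and each verification gadget introduces only a constant number of additional vertices that can be forgotten immediately, so a tree decomposition of width $O(k)$ exists.

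**The main obstacle** is designing the arithmetic of the weights so that (a) each selector gadget is "rigid" — its cheapest feasible assignment genuinely forces $s_i$ to take one of the $m$ index-values and nothing cheaper works — and (b) each verification gadget's local cost, as a function of the pair of incident index-values, attains its minimum exactly on edge-pairs and imposes a strictly positive penalty on non-edge pairs, while these local costs add up cleanly with no interference between gadgets. A clean way to handle (b) is to route, for each non-edge $ab$ of $H$, a "witness" edge whose weight is set so that if $s_i$ has the low value and $s_j$ the low value simultaneously (the encoding of choosing $a$ and $b$) then the verification vertex is forced to take a positive power, whereas for edge-pairs some already-paid-for selector power suffices; getting the accounting so that the penalty is a fixed constant per violated pair (hence the optimum is $P$ iff zero violations) is the delicate part. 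One must also double-check that no "cheating" global solution beats the intended one by, say, overloading one selector to help many verification gadgets — this is controlled by making the selector costs dominate, i.e. by scaling.

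I would therefore (1) fix the \textsc{Clique} instance $(H,k)$ and introduce $k$ selector gadgets with index-encoding weights; (2) introduce $\binom{k}{2}$ verification gadgets with non-edge-penalizing weights; (3) compute the intended optimum $P$ for the clique case and verify soundness (a $k$-clique yields a solution of value $P$) and completeness (any solution of value $\le P$ must be rigid on all selectors and violation-free, hence encodes a clique); (4) exhibit a tree decomposition of width $O(k)$ and observe all weights are polynomially bounded; (5) conclude that an $|V|^{o(t)}$ algorithm yields $m^{o(k)}$ for \textsc{Clique}, contradicting ETH. The rigidity arguments in steps (1)--(3) are where the real work lies; everything else is bookkeeping.
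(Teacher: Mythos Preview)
Your high-level template matches the paper's: an FPT reduction from a \textsc{Clique}-type problem using $k$ selector gadgets (encoding vertex choices via power values) plus verification gadgets, with treewidth $O(k)$ obtained by putting the selector vertices in every bag. The paper reduces from $k$-\textsc{Multicolored Independent Set}, which is equivalent for this purpose.

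There is, however, a genuine gap in your verification step. You write that for each pair $i<j$ you add ``a single vertex or $O(1)$-size structure connected to $s_i$ and $s_j$'' whose cost is minimized precisely when the chosen indices $(a,b)$ form an edge of $H$. This cannot work: an $O(1)$-size gadget attached only to $s_i,s_j$ can impose only a bounded-complexity arithmetic/threshold condition on the pair $(p_{s_i},p_{s_j})$; it cannot encode membership in an \emph{arbitrary} edge relation of $H$, which carries $\Theta(m^2)$ bits of information. Your later remark about ``routing, for each non-edge $ab$ of $H$, a witness edge'' is the right instinct but contradicts the $\binom{k}{2}$-gadget claim. The paper resolves this by adding one checker gadget \emph{per edge of $H$} (a $K_4$ with weight-$n$ edges, plus four edges of edge-specific weights to the relevant selector vertices). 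Treewidth stays $2k+O(1)$ because each checker adds only four private vertices attached to the global set of $2k$ selector vertices.

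The paper also supplies the concrete rigidity mechanism you only gesture at. Each color class gets \emph{two} selector vertices $u_c,u_c'$, joined to a matching of $n$ weight-$n$ edges; the $n$ edges incident to $u_c$ carry distinct weights $1,\ldots,n$, and the corresponding edges to $u_c'$ carry the complementary weights summing to $n+1$. A counting argument then forces any optimal solution to satisfy $p_{u_c}+p_{u_c'}=n$ exactly, so the pair $(p_{u_c},p_{u_c'})$ encodes a single index in $\{0,\ldots,n\}$. The checker for an edge between the $i$-th vertex of $V_c$ and the $j$-th vertex of $V_d$ then tests the single condition ``$p_{u_c}=i$ and $p_{u_d}=j$'' via weights $i+1,\,n-i+1,\,j+1,\,n-j+1$; the $K_4$ needs $3n$ unless at least one of the four incident edges is already covered, which happens iff $(p_{u_c},p_{u_d})\neq(i,j)$. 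This is the arithmetic you correctly flagged as the hard part but did not carry out.
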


\begin{proof}

\begin{figure}
\centering
\begin{tabular}{ccc}
\begin{minipage}{0.4\textwidth}
\includegraphics[width=\textwidth]{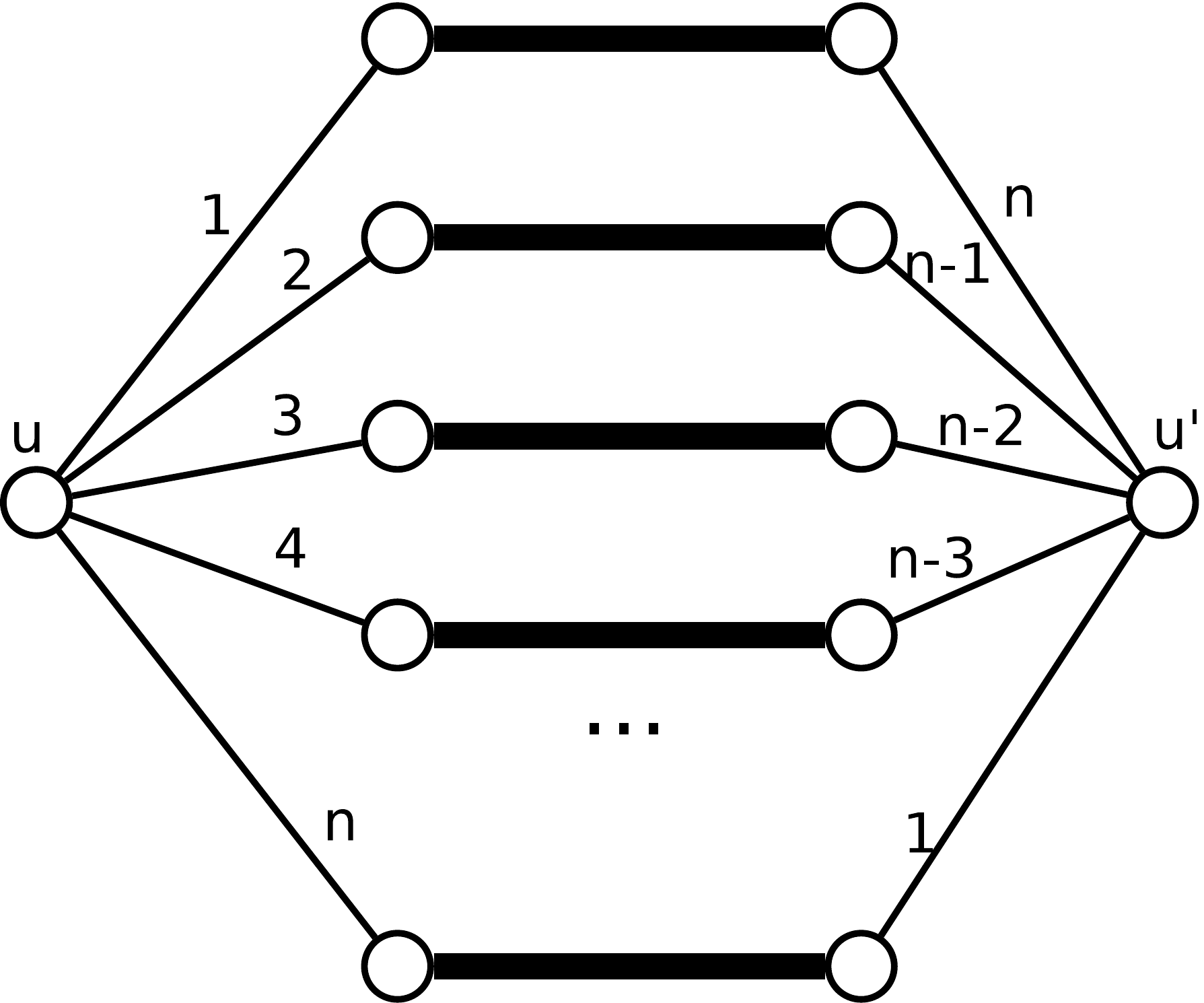}
\end{minipage}& \hspace{0.08\textwidth} &
\begin{minipage}{0.4\textwidth}
\includegraphics[width=\textwidth]{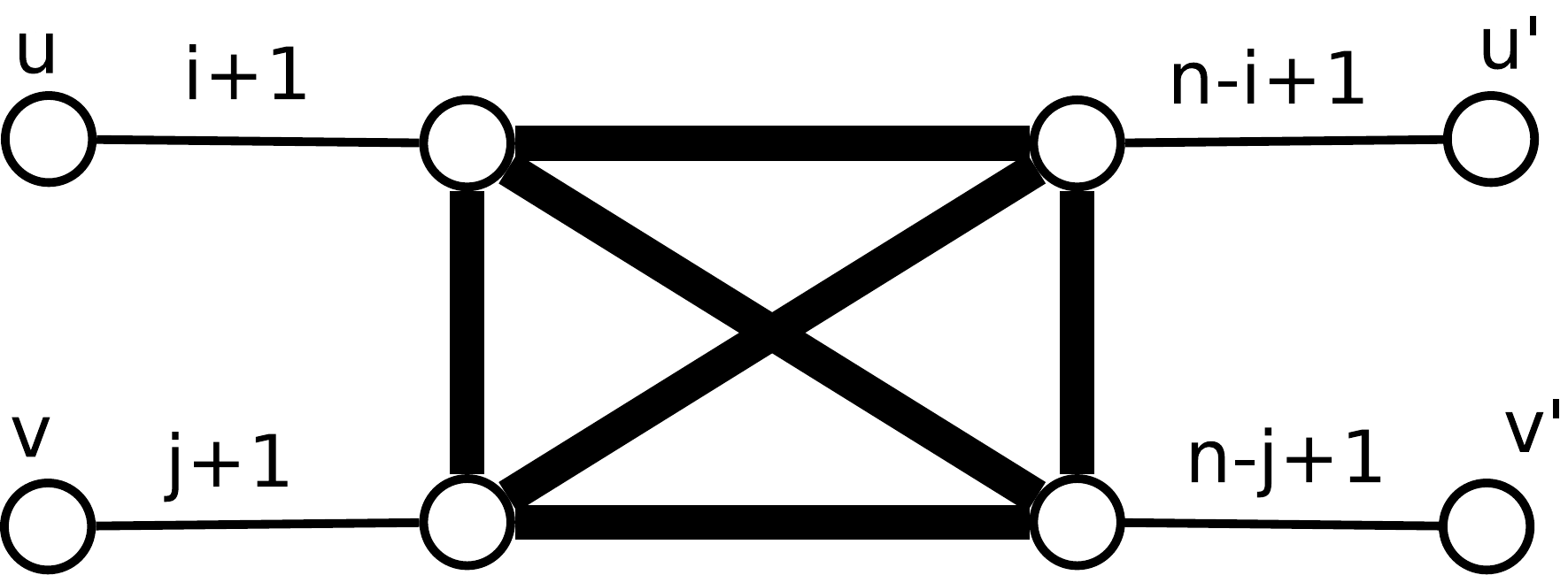}
\end{minipage}
\end{tabular}
\caption{Main gadgets of Theorem \ref{thm:twhard}. Thick lines represent weight $n$ edges.}
\label{fig:reduction}
\end{figure}

We describe a reduction from $k$-Multicolored Independent Set. In this problem
we are given a graph whose vertex set has been partitioned into $k$ cliques
$V_1,\ldots,V_k$ and we are asked if this graph contains an independent set of
size $k$. We assume without loss of generality that
$|V_1|=|V_2|=\ldots=|V_k|=n$ and that the vertices of each part are numbered
$\{1,\ldots,n\}$. It is known that if an algorithm can solve this problem in
$n^{o(k)}$ time then the ETH is false.

Our reduction relies on two main gadgets, depicted in Figure
\ref{fig:reduction}. We first describe the choice gadget, depicted on the left
side of the figure. This gadget contains two vertices $u,u'$ that will be
connected to the rest of the graph. In addition, it contains $n$ independent
edges, each of which is given weight $n$. Each edge has one of its endpoints
connected to $u$ and the other to $u'$. The weights assigned are such that no
two edges incident on $u$ have the same weight, and for each internal edge the
weight of the edges connecting it to $u,u'$ add up to $n+1$.

The first step of our construction is to take $k$ independent copies of the
choice gadget, and label the high-degree vertices $u_1,\ldots,u_k$ and
$u_1',\ldots,u_k'$. As we will see, the idea of the reduction is that the power
assigned to $u_i$ will encode the choice of vertices for the independent set in
$V_i$ in the original graph.

We now consider the second gadget of the figure (the checker), which consists
of a $K_4$, all of whose edges have weight $n$. We complete the construction as
follows: for every edge of the original graph, if its endpoints are the $i$-th
vertex of $V_c$ and the $j$-th vertex of $V_d$, we add a copy of the checker
gadget, where each of the vertices $u_c,u_c',u_d,u_d'$ is connected to a
distinct vertex of the $K_4$. The weights are $i+1, n-i+1, j+1, n-j+1$ for the
edges incident on $u_c,u_c',u_d,u_d'$ respectively.

This completes the description of the graph. We now ask if there exists a power
vertex cover with total cost at most $k(n^2+n)+3mn$, where $m$ is the number of
edges of the original graph. Observe that the treewidth of the constructed
graph is $2k+O(1)$, because deleting the vertices $u_i,u_i',
i\in\{1,\ldots,k\}$ turns the graph into a disconnected collection of $K_2$s and
$K_4$s.

First, suppose that the original graph has an independent set of size $k$. If
the independent set contains vertex $i$ from the set $V_c$, we assign the value
$i$ to $u_c$ and $n-i$ to $u_c'$. Inside each choice gadget, we consider each
edge incident on $u_c$ not yet covered, and we assign value $n$ to its other
endpoint. Similarly, we consider each edge incident on $u_c'$ not yet covered
and assign value $n$ to its other endpoint. Since all weights are distinct and
from $\{1,\ldots,n\}$, we will thus select $n-i$ vertices from the uncovered
edges incident on $u_i$ and $i$ vertices from the uncovered edges incident on
$u_i'$; thus the total value spent on each choice gadget is $n^2+n$. To see
that this assignment covers also the weight $n$ edges inside the matching,
observe that since the edges connecting each to $u,u'$ have total weight $n+1$,
at least one is not covered by $u_c$ or $u_c'$, thus one of the internal
endpoints is taken.

Let us now consider the checker gadgets. Recall that we have one such gadget
for every edge. Consider an edge between the $i$-th vertex of $V_c$ and the
$j$-th vertex of $V_d$, so that the weights are those depicted in Figure
\ref{fig:reduction}. Because we started from an independent set of $G$ we know
that for the values we have assigned at least one of the following is true:
$p_{u_c}\neq i$ or $p_{u_d}\neq j$, since these values correspond to the
indices of the vertices of the independent set. Suppose without loss of
generality that $p_{u_c}\neq i$. Therefore, $p_{u_c}>i$ or $p_{u_c}<i$. In the
first case, the edge connecting $u_c$ to the $K_4$ is already covered, so we
simply assign value $n$ to each of the three vertices of the $K_4$ not
connected to $u_c$. In the second case, we recall that we have assigned
$p_{u_c'}=n-p_{u_c}$. Therefore the edge incident on $u_c'$ is covered. Thus, in
both cases we can cover all edges of the gadget for a total cost of $3n$. Thus,
if we started from an independent set of the original graph we can construct a
power vertex cover of total cost $k(n^2+n)+3mn$.

For the other direction, suppose that a vertex cover of cost at most
$k(n^2+n)+3mn$ exists. First, observe that since the checker gadget contains a
$K_4$ of weight $n$ edges, any solution must spend at least $3n$ to cover it.
There are $m$ such gadgets, thus the solution spends at most $k(n^2+n)$ on the
remaining vertices.

Consider now the solution restricted to a choice gadget. A first observation is
that there exists an optimal solution that assigns all degree 2 vertices values
either $0$ or $n$. To see this, suppose that one such vertex has value $i$, and
suppose without loss of generality that it is a neighbor of $u$. We set its
value to $0$ and the value of $u$ to $\max\{i,p_u\}$. This is still a feasible
solution of the same or lower cost.

Suppose that the optimal solution assigns total value at most $n^2+n$ to the
vertices of a choice gadget. It cannot be using fewer than $n$ degree-two
vertices, because then one of the internal weight $n$ edges will not be covered,
thus it spends at least $n^2$ on such vertices. Furthermore, it cannot be using
$n+1$ such vertices, because then it would have to assign 0 value to $u_i,u_i'$
and some edges would not be covered. Therefore, the optimal solution uses exactly
$n$ degree-two vertices, and assigns total value at most $n$ to $u_i,u_i'$. We
now claim that the total value assigned to $u_i,u_i'$ must be exactly $n$. To
see this, suppose that $p_{u_i}+p_{u_i'}<n$. The total number of edges covered
by $u_i,u_i'$ is then strictly less than $n$. There exist therefore $n+1$ edges
incident on $u_i,u_i'$ which must be covered by other vertices. By the pigeonhole
principle, two of them must be connected to the same edge. But since we only
selected one of the two endpoints of this edge, one of the edges must be
not covered.

Because of the above we can now argue that if the optimal solution has total
cost at most $k(n^2+n)+3mn$ it must assign value exactly $3n$ to each checker
gadget and $n^2+n$ to each choice gadget. Furthermore, this can only be
achieved if $p_{u_c}+p_{u_c'}=n$ for all $c\in\{1,\ldots,k\}$. We can now see
that selecting the vertex with index $p_{u_c}$ in $V_c$ in the original graph
gives an independent set. To see this, suppose that $p_{u_c}=i$ and $p_{u_d}=j$
and suppose that an edge existed between the corresponding vertices in the
original graph. It is not hard to see that in the checker gadget for this edge
none of the vertices $u_c,u_c',u_d,u_d'$ covers its incident edge. Therefore,
it is impossible to cover every edge by spending exactly $3n$ on this gadget.
\end{proof}

\subsection{Exact and Approximation Algorithms for Treewidth}

In the previous section we showed that \textsc{PVC} is much harder than Vertex Cover,
when parameterized by treewidth. This raises the natural question of how one
may be able to work around this added complexity. Our first observation is
that, using standard techniques, it is possible to obtain FPT algorithms for
this problem by adding extra parameters. In particular, we show that the
problem is FPT if, in addition to treewidth, we consider either the maximum
weight $M$ of any edge, or the maximum degree $\Delta$ as parameters.

Since the algorithms for these two parameterizations are essentially identical,
we present a unified algorithm by considering a slightly different formulation
of the problem. In \textsc{List Directed Power Vertex Cover} (\textsc{LDPVC})
we are given the same input as in \textsc{DPVC} as well as a function $L:V\to
2^\mathbb{N}$. We say that a solution is feasible if, in addition to satisfying
the normal \textsc{DPVC} requirements, we also have for all $v\in V$ that
$p_v\in L(v)$. In other words, in this version of the problem we are also given
a function $L$ which lists the feasible power levels allowed for each vertex
$v\in V$.  In the remainder we denote by $L_{\max}$ the maximum size of any set
of feasible power levels, that is, $L_{\max} = \max_{v\in V} |L(v)|$.

\begin{theorem} \label{thm:twXPList}

There exists an algorithm which, given an instance of \textsc{LDPVC} and a tree
decomposition of width $t$ of the input graph, computes an optimal solution in
time $L_{\max}^{t+1}n^{O(1)}$.

\end{theorem}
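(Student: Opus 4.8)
The plan is to give a standard dynamic programming algorithm over the supplied nice tree decomposition. The states we track at each bag $X_i$ record, for every vertex $v \in X_i$, which power level $p_v \in L(v)$ has been assigned to it; there are at most $L_{\max}^{|X_i|} \le L_{\max}^{t+1}$ such states. For each state we store the minimum total cost (sum of assigned power values) of a partial solution defined on the subgraph induced by the vertices introduced in the subtree rooted at $i$, subject to the constraint that every edge with both endpoints already forgotten is covered, and that the vertices currently in the bag have exactly the recorded power values. First I would set up this table formally, being careful to specify that the cost of a vertex is charged exactly once — at the forget node where it leaves the bag — so that costs are not double counted at join nodes.

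Next I would describe the transitions for the four node types. \textbf{Leaf nodes} with a single vertex $v$: we enumerate the at most $L_{\max}$ choices of $p_v \in L(v)$ and initialize the cost to $0$ (we charge $v$ later). \textbf{Introduce nodes} adding vertex $v$ to child bag $X_j$: for each table entry of the child and each choice $p_v \in L(v)$, we check that every edge between $v$ and $X_j$ is covered by the chosen power values (i.e. for edge $(u,v)$ either $p_u \ge w_{u,v}$ or $p_v \ge w_{v,u}$); if so we copy the child's cost. \textbf{Forget nodes} removing vertex $v$: for each target state on $X_i$, we take the minimum over all child states that agree on $X_i$ and that, together with their value of $p_v$, cover all edges between $v$ and $X_i$; to this minimum we add $p_v$, thereby charging $v$'s cost. \textbf{Join nodes} with children $X_j = X_{j'} = X_i$: for each state on $X_i$ we add the two children's costs (each partial solution having already been charged for its own forgotten vertices, which are disjoint). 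Finally, the answer is read off at the root (whose bag we may assume is empty after padding with forget nodes), and a solution is recovered by standard backtracking.

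\textbf{Correctness} follows from the usual tree-decomposition invariants: every edge appears in some bag, so by the connectivity property it is checked at the introduce node of its later-introduced endpoint (or at a forget node), guaranteeing feasibility; and because the subtrees below a join node share only the vertices of $X_i = X_j = X_{j'}$, the forgotten vertex sets of the two sides are disjoint, so summing costs at a join node is exactly right. For the \textbf{running time}, each of the $O(n)$ nodes of the nice decomposition has a table of size at most $L_{\max}^{t+1}$, and each transition (in particular the forget and join transitions) can be carried out in $L_{\max}^{t+1} n^{O(1)}$ time, giving the claimed $L_{\max}^{t+1} n^{O(1)}$ total bound.

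\textbf{The main obstacle} I anticipate is purely bookkeeping: making sure every edge is covered exactly once (i.e. not forgetting to verify an edge, which would be unsound) and every vertex's cost is counted exactly once (double counting at join nodes would be a subtle bug). Both are handled by the convention of checking an edge when its second endpoint is introduced and charging a vertex when it is forgotten; spelling this out carefully is the only delicate part of the write-up. As immediate corollaries we obtain the $O^*((M+1)^t)$-time algorithm for \textsc{DPVC} — setting $L(v) = \{0,1,\ldots,M\}$ for all $v$, which is without loss of generality optimal since no vertex ever needs power exceeding the largest incident weight $\le M$ — and the $O^*((\Delta+1)^t)$-time algorithm — setting $L(v)$ to be $\{0\}$ together with the (at most $\Delta$) distinct weights $w_{u,v}$ appearing on edges incident to $v$, since an optimal solution may always be assumed to assign each vertex either $0$ or the weight it is actually ``needed'' for.
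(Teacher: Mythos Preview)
Your proposal is correct and follows essentially the same standard dynamic-programming approach over a nice tree decomposition as the paper's (sketched) proof. The only cosmetic difference is that you charge a vertex's cost at its forget node rather than at its introduce node (as the paper does), which is an equally valid convention that happens to make the join-node bookkeeping cleaner; otherwise the arguments coincide.
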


\begin{proof}

The algorithm uses well-known DP techniques, so we only sketch the details.
Given a nice rooted tree decomposition of $G$, see \cite{BK08}, we compute a DP table
for every bag, starting from the leaves. For every bag $B$ we consider every
possible assignment of power values to the vertices of $B$; hence we consider
at most $L_{\max}^{t+1}$ such assignments, as each bag contains at most $t+1$
vertices, and each vertex has at most $L_{\max}$ allowed assignments. For each
such assignment we store the cost of the best solution that uses this
assignment for the vertices of $B$ and covers all edges incident on vertices
that only appear in bags below $B$ in the tree decomposition. It is not hard to
see how to compute such a table for Join and Forget nodes. Finally, for Insert
nodes, we consider every possible allowed value of the inserted vertex in
combination with every power assignment for the vertices of the bag. For each
such combination we check if all edges induced by the bag are covered. If that
is the case, the cost of the solution is increased by the power level of the
inserted vertex; otherwise we set it to $\infty$ to denote an infeasible
solution. \end{proof}
 
\begin{theorem} \label{thm:twXP}

There exists an algorithm which, given an instance of \textsc{DPVC} and a tree
decomposition of width $t$ of the graph, computes an optimal solution in time
$(M+1)^{t+1}n^{O(1)}$. Similarly, there exists an algorithm that performs the same
in time $(\Delta+1)^{t+1}n^{O(1)}$.

\end{theorem}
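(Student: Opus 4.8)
The plan is to derive both statements as consequences of Theorem~\ref{thm:twXPList} by encoding a given \textsc{DPVC} instance as an \textsc{LDPVC} instance with suitably restricted lists, and then bounding $L_{\max}$ in each of the two regimes. First I would observe that in any \textsc{DPVC} instance it never helps to assign a vertex $v$ a power value that is not of the form $w_{u,v}$ for some neighbour $u$, nor any value exceeding $\max_{u\in N(v)} w_{u,v}$: given any feasible solution, rounding each $p_v$ down to the largest demand $w_{u,v}$ it still dominates (or to $0$) preserves feasibility and does not increase the cost. Hence an optimal solution can always be assumed to satisfy $p_v\in L(v):=\{0\}\cup\{w_{u,v} : u\in N(v)\}$ for every $v$. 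This makes the \textsc{DPVC} instance equivalent to the \textsc{LDPVC} instance with these lists, and we may invoke Theorem~\ref{thm:twXPList}.

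Next I would bound $L_{\max}$ in the two cases. For the degree parameterization, $|L(v)| \le d(v)+1 \le \Delta+1$ directly from the definition of $L(v)$ above, which gives the $(\Delta+1)^{t+1}n^{O(1)}$ bound. For the maximum-weight parameterization, every demand lies in $\{1,\ldots,M\}$, so the candidate values $\{0\}\cup\{w_{u,v}: u\in N(v)\} \subseteq \{0,1,\ldots,M\}$, giving $|L(v)| \le M+1$ and hence the $(M+1)^{t+1}n^{O(1)}$ bound. In both cases the lists can be computed in polynomial time, and the width of the supplied tree decomposition is unchanged by passing to the list formulation (we modify only the lists, not the graph), so Theorem~\ref{thm:twXPList} applies verbatim.

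The only genuinely non-routine point is justifying the rounding step — that restricting to these list values is without loss of generality for \emph{optimality}, not merely feasibility. I would spell this out as a short exchange argument: take an optimal solution $(p_v)_{v\in V}$; for each $v$, if $p_v$ is positive but not equal to any $w_{u,v}$, let $p_v' = \max\{w_{u,v} : u\in N(v),\ w_{u,v}\le p_v\}$ (and $p_v'=0$ if this set is empty). Every edge $(u,v)$ previously covered by $v$ had $p_v\ge w_{v,u}$, hence $w_{v,u}$ is among the values not exceeding $p_v$, so $p_v'\ge w_{v,u}$ and the edge is still covered by $v$; edges covered by their other endpoint are unaffected. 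Since $p_v'\le p_v$ for all $v$, the total cost does not increase, so the modified solution is still optimal and lies within the lists. This closes the argument, and the two running-time bounds follow.
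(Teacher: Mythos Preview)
Your proof is correct and follows essentially the same approach as the paper's: reduce to \textsc{LDPVC} via lists of relevant power values and invoke Theorem~\ref{thm:twXPList}. One minor notational slip: with the paper's convention ($v$ covers edge $(u,v)$ iff $p_v\ge w_{v,u}$), the list should be $L(v)=\{0\}\cup\{w_{v,u}:u\in N(v)\}$ rather than $\{w_{u,v}:u\in N(v)\}$; once the indices are swapped, your rounding argument goes through verbatim.
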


\begin{proof}

The theorem follows directly from Theorem \ref{thm:twXPList}. In particular, if
the maximum weight is $M$, it suffices to only consider solutions where all
power levels are in $\{0,\ldots,M\}$. Hence we can produce an equivalent
instance of \textsc{LDPVC} by setting $L(v)=\{0,\ldots,M\}$ for all vertices
and invoke Theorem \ref{thm:twXPList}. On the other hand, if the maximum degree
is $\Delta$, the optimal solution to \textsc{DPVC} will use for each vertex one
of $\Delta+1$ possible power values: either we give this vertex power $0$, or
we give it power equal to the weight of one of its incident edges. By
constructing the function $L$ in this way we can again invoke Theorem
\ref{thm:twXPList}.  \end{proof}

Theorem \ref{thm:twXP} indicates that the problem's hardness for treewidth is
not purely combinatorial; rather, it stems mostly from the existence of large
numbers, which force the natural DP table to grow out of proportion. Using this
intuition we are able to state the main algorithmic result of this section
which shows that, in a sense, the problem's W-hardness with respect to
treewidth is ``soft'': even if we do not add extra parameters, it is always
possible to obtain in FPT time a solution that comes arbitrarily close to the
optimal.

\begin{theorem} \label{thm:tw}

There exists an algorithm which, given an instance of \textsc{DPVC}, $G(V,E)$ and a
tree decomposition of $G$ of width $t$, for any $\epsilon>0$, produces a
$(1+\epsilon)$-approximation of the optimal in time $\left(O(\frac{\log
n}{\epsilon})\right)^tn^{O(1)}$. Therefore, \textsc{DPVC} admits an FPT approximation
scheme parameterized by treewidth.

\end{theorem}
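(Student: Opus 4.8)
The plan is to reduce to Theorem~\ref{thm:twXPList}: it is enough to compute, in polynomial time, for every vertex $v$ a list $L(v)$ of allowed power levels with $L_{\max}=\max_v|L(v)| = O(\frac{\log n}{\epsilon})$ such that the resulting \textsc{LDPVC} instance still admits a solution of cost at most $(1+O(\epsilon))\,\OPT$. Indeed, every feasible \textsc{LDPVC} solution is by definition a feasible \textsc{DPVC} solution of the same cost, so we may then run the dynamic program of Theorem~\ref{thm:twXPList} and output its solution, obtaining a $(1+O(\epsilon))$-approximation in time $L_{\max}^{t+1}n^{O(1)} = \left( O(\frac{\log n}{\epsilon})\right)^{t}n^{O(1)}$; rescaling $\epsilon$ by a constant at the end gives the theorem. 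We will allow the lists to contain rationals (the DP of Theorem~\ref{thm:twXPList} never uses integrality of the power levels, and in any case one may clear denominators by multiplying all weights by a suitable $\mathrm{poly}(n)$, which leaves $L_{\max}$ unchanged). Also, if $\epsilon<1/n$ we simply invoke Theorem~\ref{thm:twXP} directly: it runs in $(\Delta+1)^{t+1}n^{O(1)}\le n^{t+1}n^{O(1)}$ and returns an exact solution, which is within the claimed bound since then $1/\epsilon>n$; so assume $\epsilon\ge 1/n$.

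First I would fix the scale of the optimum. Using the polynomial-time $2$-approximation for \textsc{DPVC} of~\cite{Angel2015}, compute $A$ with $\OPT\le A\le 2\,\OPT$, put $\ell:=\epsilon A/n$, and define
\[
 \mathcal P \;:=\; \{0\}\ \cup\ \{\, \ell(1+\epsilon)^i \;:\; i\in\mathbb N,\ \ell(1+\epsilon)^i\le (1+\epsilon)A \,\};
\]
then set $L(v):=\mathcal P$ for every $v\in V$. Since $A/\ell=n/\epsilon$, we get $|\mathcal P| = O(\log_{1+\epsilon}(n/\epsilon)) = O(\frac{1}{\epsilon}\log\frac{n}{\epsilon}) = O(\frac{\log n}{\epsilon})$, using $\ln(1+\epsilon)\ge\epsilon/2$ for $0<\epsilon\le1$ together with $\epsilon\ge 1/n$.

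The heart of the argument is to show that this list restriction costs almost nothing. Let $p^*$ be an optimal \textsc{DPVC} solution, so $p^*_v\le\OPT\le A$ for every $v$. Round each value \emph{up} into $\mathcal P$: set $\bar p_v:=0$ if $p^*_v=0$, set $\bar p_v:=\ell$ if $0<p^*_v<\ell$, and otherwise let $\bar p_v$ be the smallest element of $\mathcal P$ that is $\ge p^*_v$ (this exists since $\max\mathcal P>A\ge p^*_v$, and it is at most $(1+\epsilon)p^*_v$ because consecutive nonzero elements of $\mathcal P$ differ by a factor $1+\epsilon$ and $p^*_v\ge\ell$). Because all values only increase, $\bar p$ is still a feasible \textsc{DPVC} solution, and $\bar p_v\in\mathcal P$ for all $v$. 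Its cost is bounded by
\[
 \sum_{v\,:\,0<p^*_v<\ell}\ell \;+\; \sum_{v\,:\,p^*_v\ge\ell}(1+\epsilon)p^*_v
 \;\le\; n\ell + (1+\epsilon)\OPT \;=\; \epsilon A + (1+\epsilon)\OPT \;\le\; (1+3\epsilon)\OPT .
\]
Hence the optimum of the \textsc{LDPVC} instance with all lists equal to $\mathcal P$ lies in $[\OPT,(1+3\epsilon)\OPT]$; Theorem~\ref{thm:twXPList} computes it exactly, and the assignment it returns is a $(1+3\epsilon)$-approximation of $\OPT$. Running the whole construction with $\epsilon/3$ in place of $\epsilon$ finishes the proof.

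I expect the one genuinely delicate choice to be the lower cutoff $\ell$ of the geometric grid. A grid reaching down to a constant would span a ratio of order $A$ and thus contain $\Theta(\frac{\log A}{\epsilon})$ points — not $O(\frac{\log n}{\epsilon})$ once the weights, and hence $\OPT$, are large; conversely, rounding all sub-cutoff values up to a cutoff asymptotically larger than $\epsilon\,\OPT/n$ could blow up the cost by more than an $\epsilon$ fraction. The value $\ell=\Theta(\epsilon\,\OPT/n)$, which the $2$-approximation lets us estimate to within a constant factor, is exactly the threshold at which a per-vertex loss of at most $\ell$, incurred on at most $n$ vertices, still sums to only $O(\epsilon)\cdot\OPT$. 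Everything else — feasibility being preserved under rounding up, the size bound on $\mathcal P$, the denominator-clearing technicality, and the reparametrization of $\epsilon$ — is routine.
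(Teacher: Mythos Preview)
Your proof is correct, but it follows a genuinely different route from the paper's. The paper does not use the $2$-approximation or a lower cutoff; instead it (i) guesses the largest power value $M$ used by the optimum (at most $|E|$ candidates), (ii) if $M>n^2$, divides all weights by $M/n^2$ and rounds down, arguing separately that this rescaling distorts the optimum by only a $(1+\tfrac{1}{n})$ factor, and then (iii) uses the simple grid $\{0\}\cup\{(1+\epsilon)^i:0\le i\le\lceil 2\log_{1+\epsilon}n\rceil\}$, which works because after step~(ii) all relevant power levels lie in $\{0,1,\dots,n^2\}$.

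Your approach is arguably cleaner: by using the $2$-approximation to pin down the scale $A\in[\OPT,2\OPT]$ and setting the lower cutoff $\ell=\epsilon A/n$, you avoid the weight-rescaling step and its separate error analysis altogether, at the cost of importing the approximation result of~\cite{Angel2015} and of handling the regime $\epsilon<1/n$ separately. The paper's approach is self-contained (no external approximation needed) but pays for it with the two-stage normalization. Both arguments ultimately hinge on the same observation---that a geometric grid of $O(\tfrac{\log n}{\epsilon})$ levels suffices once the relevant power range has ratio $n^{O(1)}$---but they achieve that bounded ratio by different means: you truncate from below at $\Theta(\epsilon\,\OPT/n)$, while the paper truncates from above by rescaling so that $M\le n^2$.
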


\begin{proof}

Our strategy will be to reduce the given instance to an \textsc{LDPVC} instance
where the size of all lists will be upper bounded by $O(\frac{\log
n}{\epsilon})$, without distorting the optimal solution too much. Then, we will
invoke Theorem \ref{thm:twXPList} to obtain the stated running time.

Before we begin, observe that for \textsc{PVC}, $M$ is already a lower bound on the
value of the optimal solution. However, this is not the case for \textsc{DPVC}. We can,
however, assume that we know the largest value used in the optimal solution
(there are only $|E|$ possible choices for this value, so we can guess it). Let
$M$ be this value. Now, we know that any edge $(u,v)$ with $w_{uv}>M$ must be covered
by $v$, so we give vertex $v$ value $w_{vu}$ and adjust the graph
appropriately.  In the process we produce a graph where the largest weight is
$M$ and the optimal value is at least as high as $M$.

Now, note that in general we cannot say anything about the relation between $M$
and $n$, indeed it could be the case that the two are not even polynomially
related. To avoid this, we will first ``round down'' all weights, so that we
have $M=n^{O(1)}$. To begin with, suppose that $M>n^2$ (otherwise, all weights
are polynomial in $n$ and we are done). For every edge $(u,v)\in E$ we
calculate a new weight $w_{uv}' = \lfloor \frac{w_{uv}n^2}{M}  \rfloor$.
Observe that in the new instance we constructed the maximum weight is $M'=n^2$,
so all we need to argue is that the optimal solution did not change much. Let
$\OPT$ denote the value of the optimal solution of the original instance and
$\OPT'$ denote the optimal of the new instance. It is not hard to see that
$\OPT'\le \lfloor \frac{n^2 \OPT }{M} \rfloor \le \frac{n^2\OPT}{M} $ because
if we take an optimal solution of $G$ and divide the value of each vertex by
the same value we divided the edge weights, we will obtain a feasible solution.
Suppose now that we have a solution $\SOL'$ of the new instance such that
$\SOL'\le \rho \OPT'$, for some $\rho\ge1$. We can use it to obtain an almost
equally good solution of the original instance as follows: for every vertex of
$G$ assign it the value assigned to it by $\SOL'$, multiplied by
$\frac{M}{n^2}$, and then add to it $\frac{M}{n^2}$. The total cost of this
solution is at most $\SOL'\frac{M}{n^2}+\frac{M}{n} \le
\rho\OPT'\frac{M}{n^2}+\frac{M}{n}\le \rho\OPT + \frac{\OPT}{n} =
(\rho+\frac{1}{n})\OPT$, where we have also used the assumption that $M\ge
\OPT$. 

We therefore assume in the remainder that we are given a \textsc{DPVC} instance
with $M\le n^2$ and optimal value $\OPT\ge M$. We construct an instance of
\textsc{LDPVC} by using the same (weighted) graph and assigning to all vertices
the list $\{0\}\cup \{ (1+\epsilon)^i\ |\ 0\le i \le \lceil
2\log_{1+\epsilon}n\rceil\}$. In other words, the list of allowed power levels
consists of all integer powers of $(1+\epsilon)$ that do not go above
$(1+\epsilon)n^2$, and $0$. The size of this list is $O(\log_{1+\epsilon}n) =
O(\log n /\epsilon)$, where we have used the fact that for sufficiently small
$\epsilon$ we have $\ln(1+\epsilon)\approx \epsilon$, therefore the new
instance can be solved optimally in the claimed running time by Theorem
\ref{thm:twXPList}.

We observe that if we denote by $\OPT'$ the value of the optimal in the new
\textsc{LDPVC} instance we have $(1+\epsilon)\OPT \ge \OPT'\ge \OPT$: the
second inequality is trivial, while the first follows from the fact that any
optimal solution to the \textsc{DPVC} instance can be transformed into a
feasible solution of the \textsc{LDPVC} instance by replacing every power level
$p_v$ with the value $(1+\epsilon)^{\lceil \log_{1+\epsilon}p_v\rceil}$, or in
other words, with the smallest integer power of $(1+\epsilon)$ which is higher
than $p_v$ (this is because we can assume that the optimal solution of the
\textsc{DPVC} never uses a power level above $M$, and $M\le n^2$).  Hence, the
solution produced by the algorithm of Theorem \ref{thm:twXPList} can be
transformed into a solution of the original instance with almost the same
value.

Finally, we note that the running time in the theorem implies an FPT running
time (parameterized by $t+1/\epsilon$) for the approximation scheme via
standard arguments.  First, if $t+1/\epsilon<\log n/\log\log n$ then the
running time of the algorithm is actually polynomial, hence FPT, because we
have $\log n/\epsilon < \log^2n$, $t<\log n/\log\log n $, therefore $(\log
n/\epsilon)^t < (\log^2n)^{\log n/\log\log n} = n^{2}$.  If on the other hand
$t+1/\epsilon>\log n/\log \log n$ then we have $\log n < (t+1/\epsilon)^2$
hence the running time is at most $\left(t/\epsilon\right)^{O(t)} n^{O(1)}$,
which is also FPT.  
\end{proof}

\section{Parameterizing by total power}\label{sec:standard}

We focus in this section on the standard parameterization: given an edge-weighted graph $G$ and an integer $P$ (the parameter), we want to determine if there exists a solution of total power at most $P$. We first focus on {\sc PVC} and show that it is solvable within time $O^*(1.274^P)$, thus reaching the same bound as {\sc VC} (when parameterized by the solution value). We then tackle {\sc DPVC} where a more involved analysis is needed, and we reach time $O^*(1.325^P)$.


\subsection{PVC}

The algorithm for {\sc PVC} is based on the following simple property.

\begin{property}\label{prop1} Consider an edge $e=(u,v)$ of maximum weight. Then, in any optimal solution either $p_u=w_e$ or $p_v=w_e$.
\end{property}

This property can be turned into a branching rule: considering an edge $e=(u,v)$ of maximum weight, then either set $p_u=w_e$ (remove $u$ and incident edges), or set $p_v=w_e$ (remove $v$ and incident edges). This already shows that the problem is FPT, leading to an algorithm in time $O^*(2^P)$.
To improve this and get the claimed bound, we also use the following reduction rule.
\begin{enumerate}
\item[(RR1)] Suppose that there is an edge $(u,v)$ with $w_{u,v}=M$, and the maximum weight of other edges incident to $u$ or $v$ is $B\leq M-1$. Then set $w_{u,v}=B$, and do $P\leftarrow P-(M-B)$.
\end{enumerate}
\begin{property}\label{prop2}
(RR1) is correct.
\end{property}

\begin{proof} Take a solution with the initial weights. Then $p_u=M$ or
$p_v=M$, and we get a solution on the modified instance by reducing the power
of $u$ or $v$ by $M-B$. Conversely, if we have a solution with the modified
weights, either $p_u=B$ or $p_v=B$, and we get a solution with the initial
weights by adding power value $M-B$ to $u$ or to $v$. \end{proof}

Now, consider the following branching algorithm.

\fbox{
\begin{minipage}{0.90\textwidth}

\begin{tabular}{ll}
& {\bf Algorithm 1} \\
\hline
\hline

STEP 1:& While (RR1) is applicable, apply it;\\

STEP 2:& If $P<0$ return NO;\\
	& If the graph has no edge return YES;\\

STEP 3:& If the maximum weight of edges is 1:\\
& \hspace{0.5cm}Apply an algorithm in time $O^*(1.274^k)$ for {\sc VC};\\

STEP 4:& Take two adjacent edges $e=(u,v)$ and $f=(u,v')$ of\\
& maximum weight. Branch as follows:\\
    & - either set $p_u=w_e$ (remove $u$, set $P\leftarrow P-w_e$);\\
    & - or set $p_v=p_{v'}=w_e$ (remove $v$ and $v'$, set $P\leftarrow P-2w_e$).\\
\end{tabular}
\end{minipage}
}

\begin{theorem} \label{thm:alg1}
Algorithm 1 solves {\sc PVC} in time $O^*(1.274^P)$.
\end{theorem}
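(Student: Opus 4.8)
The plan is to verify correctness and termination of Algorithm~1 and then bound the size of its search tree by a branching recurrence.

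\emph{Correctness and termination.} The only steps needing justification are (RR1), STEP~3 and the branching of STEP~4. Rule (RR1) is sound by Property~\ref{prop2}, and since each application strictly decreases $P$ (as $B\le M-1$) and the total edge weight, STEP~1 terminates. When STEP~3 applies, all edge weights equal $1$, so the instance is literally a \textsc{Vertex Cover} instance in which a power vertex cover of total power $\le P$ exists iff there is a vertex cover of size $\le P$; running the $O^*(1.274^k)$ \textsc{VC} algorithm with $k=P$ is then correct. For STEP~4 I would use the following combinatorial observation: once (RR1) is exhausted, \emph{every} maximum-weight edge $e=(u,v)$ is incident to another edge of maximum weight $M$ --- otherwise the largest weight among edges incident to $u$ or $v$ other than $e$ would be some $B\le M-1$ and (RR1) would still apply. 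Renaming, this edge may be taken to be $f=(u,v')$ with $v'\neq v$ (the graph is simple), and $M\ge 2$ since STEP~3 did not fire. Applying Property~\ref{prop1} to $e$ and to $f$, every optimal solution satisfies ($p_u=M$ or $p_v=M$) and ($p_u=M$ or $p_{v'}=M$), hence either $p_u=M$, or both $p_v=M$ and $p_{v'}=M$; so the two branches of STEP~4 are exhaustive over optimal solutions. STEP~2 correctly answers NO when the committed partial assignment already exceeds the budget ($P<0$) and YES for an edgeless graph (covered by the all-zero assignment). Since every branch decreases $P$ by at least $M\ge 1$, the recursion terminates.

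\emph{Running time.} Let $T(P)$ be the number of leaves of the search tree as a function of the current budget. At a STEP~4 node with maximum weight $M\ge2$ one gets $T(P)\le T(P-M)+T(P-2M)$, because the first branch removes $u$ and spends $M$ while the second removes the two distinct vertices $v,v'$, each set to power $M$, spending $2M$. At a STEP~3 node the recursion stops after $O^*(1.274^P)$ work, and $T(P)=O(1)$ for $P<0$ or an edgeless graph, the reduction steps contributing only polynomial overhead. It then suffices to check that $1.274^{-M}+1.274^{-2M}\le 1$ for every integer $M\ge 2$: writing $t=1.274^{-M}\in(0,1)$ the left-hand side is $t+t^2$, increasing in $t$ and hence decreasing in $M$, so the worst case is $M=2$, where $1.274^{-2}+1.274^{-4}\approx 0.616+0.380<1$. (Equivalently the branching vector $(M,2M)$ has branching number $\phi^{1/M}\le\phi^{1/2}=1.272\ldots<1.274$, with $\phi=(1+\sqrt 5)/2$.) A straightforward induction gives $T(P)=O^*(1.274^P)$, and multiplying by the polynomial work and depth yields the claimed running time.

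\emph{Main obstacle.} The heart of the argument --- and the reason the bound beats the naive $O^*(2^P)$ that Property~\ref{prop1} alone would give --- is the interplay between (RR1) and the branching: exhausting (RR1) forces each maximum-weight edge to have a maximum-weight neighbour, and it is precisely this that lets the second branch of STEP~4 delete two vertices (pay $2M$) rather than one; one must also make sure the auxiliary edge $f$ in that branch really does force power $M$ on its far endpoint, and that $M\ge 2$ there. Once this structural fact is nailed down, the numeric check (with $M=2$ the tight case) is routine.
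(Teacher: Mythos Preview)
Your proof is correct and follows essentially the same approach as the paper: both observe that exhausting (RR1) forces every maximum-weight edge to have a maximum-weight neighbour, branch on the shared endpoint versus the two far endpoints, and note that $M\ge 2$ gives (worst case) branching vector $(2,4)$. You give a more explicit justification of the branching's exhaustiveness via Property~\ref{prop1} applied to both $e$ and $f$, and a cleaner numerical check (the paper simply asserts that $(-2,-4)$ yields complexity below $1.274^P$).
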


\begin{proof}
When (RR1) is no longer applicable, any edge of maximum weight is adjacent to another edge of the same (maximum) weight. So in Step 4 we can always find such a pair of edges $e,f$. Then the validity of the algorithm follows from Properties~\ref{prop1} and~\ref{prop2}. For the running time, the only step to look at is Step 4. When branching, the maximum weight is $w_e=w_f\geq 2$. In the first branch $P$ is reduced by at least 2, in the second branch by at least 4, and this gives a branching factor $(-2,-4)$ inducing a complexity smaller than the claimed bound.  
\end{proof}

\subsection{DPVC}

For {\sc DPVC}, the previous simple approach does not work. Indeed, there might be no pair of vertices $(u,v)$ such that both $w_{u,v}=\max\{w_{u,z}:z\in N(u)\}$ and $w_{v,u}=\max\{w_{v,z}:z\in N(v)\}$.
If we branch on a pair $(u,v)$, the only thing that we know is that either $p_u\geq w_{u,v}$, or $p_v\geq w_{v,u}$. Setting a constraint $p_u\geq w$ corresponds to the following operation $Adjust(u,w)$.

\begin{definition}
$Adjust(u,w)$ consists of decreasing each weight $w_{u,z}$ for $z\in N(u)$ by $w$, and decreasing $P$ by $w$.
\end{definition}
Of course, if a weight $w_{u,z}$ becomes 0 or negative, then the edge $(u,z)$ is removed from the graph (it is covered by $u$).

In our algorithm, a typical branching will be to take an edge $(u,v)$, and either to apply $Adjust(u,w_{u,v})$ (and solve recursively), or to apply $Adjust(v,w_{v,u})$ (and solve recursively). Another possibility is to set the power of a vertex $u$ to a certain power $w$. In this case we must use $v$ to cover $(v,u)$ if $w_{u,v}>w$. Formally:
\begin{definition}
$Set(u,w)$ consists of (1) setting $p_u=w$, removing $u$ (and incident edges), (2) decreasing $P$ by $w$, and (3) applying $Adjust(v,w_{v,u})$ for all $(v,u)$ such that $w_{u,v}>w$.
\end{definition}
Using this, it is already easy to show that the problem is solvable in  FPT-time $O^*(1.619^P)$. To reach the claimed bound of $O^*(1.325^P)$ we need some more ingredients.
Let $M(u)=\max\{w_{u,z}:z\in N(u)\}$ and $P(u)=\sum_{z\in N(u)}w_{z,u}$. We first define two reduction rules and one branching rule.

\begin{enumerate}
\item[(RR2)] If there exists $u$ such that $P(u)\leq M(u)$, do $Adjust(v,w_{v,u})$ where $v$ is such that $w_{u,v}=M(u)$.
\end{enumerate}

\begin{enumerate}
\item[(RR3)] If there is $(u,v)$ with $w_{u,v}=w_{v,u}=2$, $w_{u,z}=1$ for all $z\in N(u)\setminus v$, and $w_{v,z}=1$ for all $z\in N(v)\setminus u$, then set $w_{u,v}=w_{v,u}=1$, and do $P\leftarrow P-1$.
\end{enumerate}

\begin{enumerate}
\item[(BR1)] If there exists $u$ with $P(u)\geq 5$, branch as follows: either $Set(u,0)$, or $Adjust(u,1)$.
\end{enumerate}
\begin{property}\label{prop3}
(RR2) and (RR3) are correct.
(BR1) has a branching factor (at worst) $(-1,-5)$.
\end{property}

\begin{proof}
For (RR2):  It is never interesting to put $p_u=M(u)$, since we can cover all edges incident to $u$ by putting power at most $P(u)$ on the neighbors of $u$. So, we can assume that the power of $u$ is smaller than $M(u)$, meaning that $v$ covers $(v,u)$.

For (RR3): Take a solution with the initial weights. Then $p_u=2$ or $p_v=2$, and we get a solution on the modified instance by reducing the power of $u$ (or $v$) by 1. Conversely, if we have a solution with the modified weights, either $p_u=1$ or $p_v=1$ and we get a solution with the initial weights by adding power 1 to $u$ or $v$.

For (BR1): When setting $p_u=0$, we need to cover all edges incident to $u$ using the other extremity. Since $P(u)\geq 5$, $P$ reduces by at least 5 in this branch (and at least 1 in the other branch).
\end{proof}

Now, before giving the whole algorithm, we detail the case where the maximum weight is 2, where a careful case analysis is needed.

\begin{lemma}\label{lemmaweight2}
Let us consider an instance where (1) (RR2) and (RR3) have been exhaustively applied, and (2) the maximum edge-weight is $w_{u,v}=2$. Then there is a branching algorithm with branching factor (at worst) $(-1,-5)$ or $(-2,-3)$.
\end{lemma}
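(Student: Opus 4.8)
The plan is to show that, for any instance satisfying the two hypotheses, one can always pick a branching rule whose branching vector is dominated by $(-1,-5)$ or $(-2,-3)$. First, if some vertex $u$ has $P(u)\ge 5$ we just invoke (BR1), which by Property~\ref{prop3} has vector $(-1,-5)$; so from now on assume $P(u)\le 4$ for every vertex. Fix an edge $(u,v)$ with $w_{u,v}=2$ (one exists since the maximum weight is $2$). Then $M(u)=2$, and since (RR2) is not applicable we get $P(u)\in\{3,4\}$. Every branching we consider splits on the value of $p_u$, with three possibilities: $p_u=0$ (perform $Set(u,0)$, which decreases $P$ by $P(u)$, plus whatever cascading reductions follow), $p_u=1$ (perform $Set(u,1)$, decreasing $P$ by $1+\sum_{z:\,w_{u,z}=2}w_{z,u}$, plus cascades), and $p_u\ge 2$ (perform $Set(u,2)$, decreasing $P$ by $2$, plus cascades).

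The easy case is when every out-edge of $u$ has weight $2$. Then setting $p_u=1$ covers no edge incident to $u$, so some optimal solution has $p_u\neq 1$ and the middle branch may be discarded; we are left with a two-way branching of vector $(-P(u),-2)$, i.e.\ $(-3,-2)$ or $(-2,-4)$, both dominated by $(-2,-3)$. (When $P(u)=4$ one can even push the first branch to $-5$: deleting all edges incident to $u$ removes a term from $P(z)$ for each neighbour $z$, which often makes (RR2) applicable at some such $z$ and forces an extra $Adjust$.) The real work is the case where $u$ has both a weight-$2$ out-edge $(u,v)$ and a weight-$1$ out-edge; here the naive three-way vector is only about $(-2,-3,-3)$ or worse, which is not good enough, so extra savings must come from the reduction rules.

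The mechanism I would exploit is that the $Set(u,\cdot)$ operations delete edges and shrink weights in $N(u)$, and — because (RR2) is tight, so $P(z)=M(z)+1$ for the relevant neighbours $z$, and all weights lie in $\{1,2\}$ — each deleted incident edge tends to drop some $P(z)$ below $M(z)$, forcing a further $Adjust$ of cost $\ge1$. Concretely, I would enumerate the possible local configurations: using $P(u)\in\{3,4\}$, $P(z)\in\{M(z)+1,\dots,4\}$ for every neighbour $z$ of $u$ and of $v$ (so in particular $|N(u)|\le 4$), and the non-applicability of (RR3) (which forbids a symmetric weight-$2$ edge all of whose other incident edges have weight $1$, and hence forces a second weight-$2$ out-edge at $u$ or at $v$), one finds only a handful of admissible weight patterns around $u$. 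For each, I would verify that either one branch is dominated and may be dropped, or the cascading $Adjust$'s make every branch drop $P$ by enough, so that the resulting two- or three-way vector is dominated by $(-2,-3)$ (or, when $P(u)=4$, by $(-1,-5)$).

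The main obstacle is exactly this case bookkeeping, and especially the tight, low-degree configurations: when $|N(u)|=2$, or when $v$ is forced by (RR2) to be essentially pendant with $w_{u,v}=2,\ w_{v,u}=1$, the direct decreases are smallest, and one must argue carefully that the claimed follow-up reduction genuinely fires — i.e.\ that the vertex it acts on still has an incident edge afterwards (so it triggers an $Adjust$ rather than merely becoming isolated) and that the extra $\ge1$ it contributes has not already been counted in another part of the same branch. A secondary subtlety is that an $Adjust$ can turn weight-$2$ edges into weight-$1$ edges, so a single branch may set off a short chain of reductions; one has to trace this chain just far enough to certify the required total drop in $P$.
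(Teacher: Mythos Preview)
Your write-up is a plan, not a proof: after dispatching the easy case (all out-edges of $u$ have weight $2$), you explicitly defer the remaining enumeration and rely on unspecified ``cascading'' applications of (RR2) to make up the shortfall. The paper's proof actually carries out that enumeration, and the decisive difference is that it does \emph{not} always branch on the vertex $u$ carrying the weight-$2$ edge. In several sub-cases it branches on a neighbour of $u$ (and in others it shows one value of $p_u$ is dominated so that no branching is needed). Concretely: when $d(u)=4$, all $w_{\cdot,u}=1$, and $w_{u,v}=w_{u,z}=2$, $w_{u,s}=w_{u,t}=1$, your three-way split on $p_u\in\{0,1,2\}$ yields raw decreases $(-4,-3,-2)$, whose branching number exceeds $1.46$. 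To rescue this you would need a guaranteed extra unit in the $Set(u,2)$ branch, but (RR2) need not fire on any neighbour --- for instance if $z$ has $M(z)=1$ and $P(z)=4$, then after removing the edge $(u,z)$ one gets $P(z)=2>1=M(z)$ and no reduction applies. The paper instead branches on $z$ (either $Set(z,0)$, which forces $Adjust(u,2)$ plus at least one more unit; or $Set(z,1)$, after which $p_u=2$ is dominated and one may $Adjust(v,1)$), obtaining $(-3,-2)$ directly. Analogous neighbour-branchings or dominance arguments appear for $d(u)=3$ and $d(u)=2$.

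Two smaller issues. Your reading of (RR3) is off: when $w_{u,v}=2$, non-applicability of (RR3) does not force a second weight-$2$ out-edge at $u$ or $v$; it may simply be that $w_{v,u}=1$. And the remark that after $Set(u,0)$ the drop ``often'' extends from $4$ to $5$ via cascades is not usable --- you need \emph{every} configuration in that sub-case to admit the extra reduction, which again forces you into the detailed local analysis you have postponed. In short, the skeleton is right, but the substance of the lemma is precisely the case analysis you have left undone, and your intended tool (always branch on $u$) is weaker than the one the paper uses.
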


\begin{proof}
Let us consider an instance where (1) (RR2) and (RR3) have been exhaustively applied, and (2) the maximum edge-weight is $w_{u,v}=2$.

We show by a case analysis that we can reach branching factor at worst $(-1,-5)$ or $(-2,-3)$.

If $P(u)\geq 5$, then we apply $(BR1)$. Otherwise, $P(u)\leq 4$. Since $P(u)>M(u)$ (otherwise (RR2) applies), $P(u)\in \{3,4\}$. In particular, the degree of $u$ is $d(u)\in \{2,3,4\}$.

Note first that if for all $z\in N(u)$ we have $w_{u,z}=2$, then we either $Set(u,0)$ (and we fix $P(u)\geq 3$ on neighbors of $u$), or $Set(u,2)$, meaning that $P$ reduces by 2. This gives a branching factor at worst $(-3,-2)$. Thus, we can assume in the sequel that there exists $z\in N(u)$ such that $w_{u,z}=1$.

We need to consider several cases, taking a vertex $u$ (incident to an edge of weight 2) of maximum degree:
\begin{enumerate}
\item If $d(u)=4$. Then since $P(u)\leq 4$ all edges $(z,u)$ have weight $w_{z,u}=1$.
\begin{enumerate}
    \item If there is only one $z$ with $w_{u,z}=1$: it is never interesting to put power 1 on $u$. So we do either $Set(u,0)$ (and we fix 4), or $Set(u,2)$ and we fix 2.
    \item If there are three neighbors $z$ with $w_{u,z}=1$: it is never interesting to set $p_u=2$ (instead set $p_u=1$ and give the extra power 1 to $v$ to cover $(v,u)$), so $(u,v)$ is covered by $v$ and without branching we do $Adjust(v,1)$.
    \item If $w_{u,v}=w_{u,z}=2$, $w_{u,s}=w_{u,t}=1$. If both $v$ and $z$ have degree 1, then it is never interesting to take them, so we do $Set(u,2)$ without branching. Otherwise, suppose that $z$ has degree at least 2. Then we branch on $z$: we do either $Set(z,0)$ (and we fix at least 3: 2 for $Adjust(u,2)$ and 1 for the other neighbor), or $Set(z,1)$. But in this latter case it is never interesting to set $p_u=2$ (set $p_u=1$ and put extra power 1 on $v$ instead), so we also do $Adjust(v,1)$. The branching factor is then at worst $(-3,-2)$.
\end{enumerate}
\item If $d(u)=3$, with $N(u)=\{v,z,t\}$:
\begin{enumerate}
    \item Suppose first that $w_{u,z}=2$ and $w_{u,t}=1$.

    If $w_{t,u}=1$ it is never interesting to set $p_u=1$ (set $p_u=0$ and put extra weight on $t$ instead), so we do either $Set(u,0)$ (and fix at least $P(u)\geq 3$ in the set $N(u)$ of neighbors of $u$), or $Set(u,2)$. We get a branching factor $(-3,-2)$.

    Otherwise, $w_{t,u}=2$. Since $P(u)\leq 4$, $w_{v,u}=w_{z,u}=1$. Then if $d(v)=d(z)=1$, we do not need to branch, we do $Set(u,2)$. Otherwise, for instance $d(z)\geq 2$. We branch on $z$: we do either $Set(z,0)$ and we fix at least 3 in $N(z)$, or $Adjust(z,1)$ but in this case it is never interesting to set $p_u=2$ (set $p_u=1$, and put the extra weight 1 on $v$ instead), so we do $Adjust(v,1)$. We get a branching factor $(-3,-2)$.
    \item Now suppose that $w_{u,z}=w_{u,t}=1$.

    If $w_{v,u}=1$ it is never interesting to set $p_u=2$, so we do $Adjust(v,1)$ without branching. The only remaining case is $w_{v,u}=2$.

    If $v$ has another neighbor $s$ such that $w_{v,s}=2$: we branch on $v$, by doing either $Set(v,2)$, or by considering $w(v)<2$ hence $Adjust(u,2)$ and $Adjust(s,w_{s,v})$. We get a branching factor $(-2,-3)$.

    Otherwise, we have $(u,v)$ with $w_{u,v}=w_{v,u}=2$, $w_{u,z}=1$ for all $z\in N(u)\setminus \{v\}$ and $w_{v,z}=1$ for all $z\in N(v)\setminus \{u\}$. But this cannot occur thanks to Rule (RR3).
       
\end{enumerate}
\item If $d(u)=2$, $N(u)=\{v,z\}$. Then $w_{u,z}=1$ (since as mentioned before there exists $z\in N(u)$ such that $w_{u,z}=1$).
\begin{enumerate}
\item If $w_{z,u}=1$ it is never interesting to set $p_u=1$. As previously, we do either $Set(u,0)$ (and fix $P(u)\geq 3$ in $N(u)$), or $Set(u,2)$ and reduce $P$ by two. So again we get a branching factor $(-3,-2)$.
\item If $w_{v,u}=1$, it is never interesting to set $p_u=2$, so we do $Adjust(v,1)$.
\item Otherwise, $w_{v,u}=w_{z,u}=2$. Let us look at $z$: it has degree 2 with $w_{z,u}=2$ and $w_{u,z}=1$ so this is a previous case with $u=z$.
\end{enumerate}
\end{enumerate}
\end{proof}

We are now ready to describe the main algorithm. $N^2(u)$ denotes the set of vertices at distance 2 from $u$.\\

\fbox{
\begin{minipage}{0.93\textwidth}

\begin{tabular}{ll}
& {\bf Algorithm 2} \\
\hline
\hline

STEP 1:& While (RR2) or (RR3) is applicable, apply it;\\

STEP 2:& If $P<0$ return NO;\\
	& If the graph has no edge return YES;\\

STEP 3:& If there is a vertex $u$ with $P(u)\geq 5$: apply (BR1);\\

STEP 4:& If there exists $(u,v)$ with either $w_{u,v}+w_{v,u}\geq 6$, or  \\
& $w_{u,v}=2$ and $w_{v,u}=3$: branch by either $Adjust(u,w_{u,v})$,\\
& or $Adjust(v,w_{v,u})$;\\

STEP 5:& If there exists $(u,v)$ of weight $w_{u,v}=3$:\\
& - if $N^2(u)=\{t\}$ and  $w_{t,z}=1$ for all $z\in N(u)$:\\
    & \hspace{0.5cm} - either $Adjust(t,1)$, and solve the problem on $G[N[u]]$; \\
    & \hspace{0.5cm} - or $Set(t,0)$; \\
    & - otherwise:\\
    & \hspace{0.5cm} - either $Adjust(v,1)$;\\
    & \hspace{0.5cm} - or $Set(u,3)$, and $Set(z,0)$ for all $z\in N(u)$;  \\

STEP 6:& If the maximum weight is at most 2, then branch as in Lemma~\ref{lemmaweight2}.
\end{tabular}
\end{minipage}
}

\vspace{0.3cm}

In Step 5, when doing $Adjust(t,1)$, as we will see in the proof $N[u]$ becomes separated from the rest of the graph, ie. a connected component with at most 5 vertices, so we can find an optimal solution on it in constant time.

\begin{theorem} \label{thm:alg2}
    Algorithm 2 solves {\sc DPVC} in time $O^*(1.325^P)$.
\end{theorem}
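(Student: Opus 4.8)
The plan is to establish correctness of each reduction/branching rule used by Algorithm 2, and then to bound the branching factor of every branching node by something at most as good as $(-2,-3)$ (equivalently, to show that the recurrence $T(P)\le T(P-2)+T(P-3)$ dominates, which solves to $c^P$ with $c\approx 1.3247$, the plastic number, matching the claimed $O^*(1.325^P)$). Correctness of (RR2), (RR3) and (BR1) is already given by Property~\ref{prop3}, and Step~6 is handled by Lemma~\ref{lemmaweight2}, whose worst case is also $(-2,-3)$ (and $(-1,-5)$, whose branching factor $\approx 1.3248$ is still below $1.325$). So the main work left is: (i) verify that when the algorithm reaches each step, the earlier steps are no longer applicable, so the hypotheses of the current step hold; (ii) analyze Step~4 and Step~5.

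For Step~4, I would argue as follows. After Step~3, every vertex $u$ has $P(u)\le 4$; in particular $d(u)\le 4$ and, since (RR2) has been applied exhaustively, $P(u)>M(u)$ for every $u$. If there is an edge $(u,v)$ with $w_{u,v}+w_{v,u}\ge 6$: branching by $Adjust(u,w_{u,v})$ or $Adjust(v,w_{v,u})$ is correct because in any solution $p_u\ge w_{u,v}$ or $p_v\ge w_{v,u}$, and $Adjust$ exactly encodes the chosen constraint (decreasing $P$ by the amount forced). The branch reducing $P$ by $w_{u,v}$ and the branch reducing by $w_{v,u}$ give a pair summing to at least $6$; the worst such pair is $(-3,-3)$ whose factor is $2^{1/3}\approx 1.26<1.325$, and any pair like $(-2,-4),(-1,-5),(-2,-5)$ etc.\ is no worse. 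The explicitly listed case $w_{u,v}=2,w_{v,u}=3$ gives exactly $(-2,-3)$. In all sub-cases the recurrence is dominated by $(-2,-3)$, so this step is fine.

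For Step~5 we are in the situation that the maximum weight is $3$ (Step~4 killed all edges with $w_{u,v}+w_{v,u}\ge 6$ as well as the $(2,3)$ pattern, so for an edge of weight $3$ the reverse weight is $1$, i.e.\ $w_{u,v}=3$ and $w_{v,u}=1$), and every vertex still has $P(u)\le 4$. I would split on the structure of $N^2(u)$. In the first bullet ($N^2(u)=\{t\}$ and $w_{t,z}=1$ for all $z\in N(u)$): here all of $u$'s neighbors other than $v$ have low weight towards everything, and I would check that after $Adjust(t,1)$ all edges from $N(u)$ to the outside are removed, so $G[N[u]]$ becomes its own component of at most $5$ vertices (since $d(u)\le 4$), solvable in constant time; the alternative branch $Set(t,0)$ forces $t$ to pay nothing and pushes its demands onto its neighbors. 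I need to show this branching is exhaustive (either $p_t\ge 1$ or $p_t=0$) — trivially true since powers are nonnegative integers — and that the branching factor is good; one branch reduces $P$ by $1$ (the $Adjust(t,1)$, after which the isolated component is solved exactly, contributing an additive constant, hence effectively a $(-1)$ leaf) and the $Set(t,0)$ branch reduces $P$ by $P(t)$. Here I must argue $P(t)\ge 5$ or otherwise bound it so the pair is at least as good as $(-1,-5)$ or $(-2,-3)$ — this is the delicate accounting, likely using that $w_{u,v}=3$ forces someone in $N[u]$ (which sits "below" $t$) to pay, so the savings compound. In the "otherwise" bullet: either we put $Adjust(v,1)$ (forcing $v$ to cover $(v,u)$ — correct when it is never beneficial to raise $p_u$ just to cover $(u,v)$, which holds because then $u$ would need $p_u\ge 3$ while $v$ needs only $w_{v,u}=1$), or $u$ covers its own edge of weight $3$, i.e.\ $Set(u,3)$, and then every neighbor $z$ of $u$ must have $Set(z,0)$ — I should double check this last claim, presumably because after committing $p_u=3$ the remaining structure around $N(u)$ collapses and using any $z$ is dominated. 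The branch $Adjust(v,1)$ reduces $P$ by $1$; the branch $Set(u,3)$ plus $Set(z,0)$ reduces $P$ by at least $3$, and using the hypotheses (not the first bullet, some $z$ with $w_{u,z}\ge 2$ or $N^2(u)$ large) one more unit of savings materializes, giving at least $(-1,-5)$ or better, or $(-2,-3)$.

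The main obstacle is precisely the Step~5 branching-factor bookkeeping: one must carefully verify, in each sub-case left open after Steps~1--4, that the "cheap" branch's loss of only $1$ in $P$ is compensated either by the isolated-component being solved exactly for free (so it is a leaf, not a recursion) or by the "expensive" branch losing at least $5$; and one must confirm the "never interesting" domination arguments (replacing $p_u=1$ by $p_u=0$, or $p_u=3$ by using $v$) really preserve optimality given the constraints accumulated so far. Modulo this case analysis, solving the recurrences $T(P)\le T(P-2)+T(P-3)$ and $T(P)\le T(P-1)+T(P-5)$ both give bases below $1.325$, and together with the polynomial overhead per node (each of $Adjust$, $Set$, and the reduction rules is polynomial-time, and the search tree has polynomial depth $O(P)$ times constant branching) this yields the claimed $O^*(1.325^P)$ bound. $\qed$
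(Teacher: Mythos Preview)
Your overall plan and the handling of Steps~1--4 and~6 match the paper. The genuine gap is in Step~5, where your branching-factor accounting is wrong in both sub-cases.

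In the first bullet ($N^2(u)=\{t\}$ with all $w_{t,z}=1$) you count the $Adjust(t,1)$ branch as reducing $P$ by only $1$ (``effectively a $(-1)$ leaf'') and then plan to argue $P(t)\ge 5$ for the $Set(t,0)$ branch. This cannot work: after (RR2) one only knows $P(t)>M(t)\ge 1$, so $P(t)$ may be as small as~$2$, and the pair $(-1,-2)$ would give base $\approx 1.618$. The point you are missing is that after $Adjust(t,1)$ the component $G[N[u]]$ is isolated \emph{but still has to be paid for}: since $P(u)=4$ and $w_{v,u}=1$ force $|N(u)|\ge 2$, the optimal power on $G[N[u]]$ is at least~$2$, so this branch actually reduces $P$ by at least $1+2=3$ before recursing on the rest of the graph. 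Together with $Set(t,0)$ reducing $P$ by $P(t)\ge 2$, the pair is $(-3,-2)$, not $(-1,-5)$.

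In the ``otherwise'' bullet you say $Set(u,3)$ plus $Set(z,0)$ for all $z\in N(u)$ gives ``at least $3$'' and then ``one more unit'' appears; but you need \emph{two} more units to reach $-5$. The missing observation is that setting all $p_z=0$ forces every edge between $N(u)$ and $N^2(u)$ to be covered from the $N^2(u)$ side, and the hypothesis of this sub-case (either $|N^2(u)|\ge 2$, or $N^2(u)=\{t\}$ with some $w_{t,z}\ge 2$) guarantees this costs at least~$2$, yielding $3+2=5$. You also left the domination argument (why one may assume $p_z=0$ for all $z\in N(u)$ once $p_u=3$) unchecked; it relies on $P(u)=4$: if some neighbor $z_0$ already has $p_{z_0}\ge 1$, then setting $p_u=0$ and raising each $p_z$ to $w_{z,u}$ costs at most $P(u)-1=3$, so the alternative is no worse. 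Finally, your reading of the $Adjust(v,1)$ branch as ``it is never beneficial to raise $p_u$'' is not the correctness argument; the two branches simply correspond to the exhaustive dichotomy $p_u<3$ (hence $p_v\ge w_{v,u}=1$) versus $p_u\ge 3$.
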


\begin{proof}
Note that $1.325$ corresponds to branching factors $(-1,-5)$ and $(-2,-3)$.

We have already seen that (RR2) and (RR3) are sound, and that (BR1) gives a branching factor $(-1,-5)$.

For Step 4, if $w_{u,v}+w_{v,u}\geq 6$ this gives in the worst case a branching factor $(-1,-5)$. If $w_{u,v}=2$ and $w_{v,u}=3$ the branching factor is $(-2,-3)$.

At this point (after Step 4) there cannot remain an edge $(u,v)$ with weight $w_{u,v}\geq 5$, since Step 4 would have been applied. If there is an edge $(u,v)$ with $w_{u,v}=4$, then either $P(u)\geq 5$ (which is impossible since (BR1) would have been applied), or $P(u)\leq 4$ (which is impossible since (RR2) would have been applied). So, the maximum edge weight after step 4 is at most 3.

Thanks to Lemma~\ref{lemmaweight2}, what remains to do is to focus on Step 5, with $w_{u,v}=3=M(u)$.

First, note that then $P(u)=4$ (otherwise (RR2) or (BR1) would have been applied), and $w_{v,u}=1$ (otherwise Step 4 would have been applied).

We consider two cases.
\begin{itemize}
        \item If $N^2(u)=\{t\}$ and $w_{t,z}=1$ for all $z\in N(u)$. As explained in the algorithm, we branch on $t$: either $p_t\geq 1$, or $p_t=0$. If $p_t\geq 1$, all edges $(t,z),z\in N(u)$ are covered, so $N[u]$ is now disconnected from the rest of the graph. We can find (in constant time) the power to put on these vertices in order to optimally color the incident edges, as mentioned in the algorithm. Since we need power at least 2 for this ($u$ has at least two neighbors since $P(u)=4$ and $w_{v,u}=1$), $P$ globally reduces by at least 3. If $p_t=0$, $Set(t,0)$ induces to perform $Adjust(z,w_{z,t})$ for all $z\in N(t)$. $t$ has at least 2 neighors since $N(u)\subseteq N(t)$, and $P$ reduces by at least 2. Then, $P$ reduces by at least 3 in one branch, by at least 2 in the other, leading to a branching factor $(-3,-2)$.
    \item If $|N^2(u)|\geq 2$, or if $N^2(u)=\{t\}$ with at least one $z \in N(u)$ with $w_{t,z}=2$: setting $p_u=3$ is only interesting if all neighbors of $u$ receive weight 0 - otherwise distributing the power 3 of $u$ on neighbors of $u$ to cover all edges $(v,u)$ is always at least as good. Then in this case we have to cover edges between $N(u)$ and $N^2(u)$, so a power at least 2. Then either we have  $p_u<3$ and in this case $p_v\geq 1$, or we set $p_u=3$, $p_z=0$ for neighbors $z$ of $u$, and we fix at least  2 in $N^2(u)$. In the first branch $P$ reduces by at least 1, in the other by at least 5, leading to a branching factor $(-1,-5)$.
\end{itemize}
\end{proof}

\section{Parameterizing by the number $k$ of vertices}


We now consider as parameter the number $k$ of vertices that will
receive a positive value in the optimal solution. Note that by definition $k\le
P$; therefore, we expect any FPT algorithm with respect to $k$ to have the same or worse
performance than the best algorithm for parameter $P$.

\begin{theorem}\label{theop}
{\sc PVC} is solvable in time $O^*(1.619^k)$. {\sc DPVC} is solvable in time $O^*(k^k)$.
\end{theorem}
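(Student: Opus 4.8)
The plan is to solve both problems by branching (bounded search tree) algorithms whose progress is measured by the number of vertices that still have to receive a positive value; this is the direct analogue of the budget in the classical $O^*(1.274^k)$-time algorithm for \textsc{VC}, and since the optimal solution uses exactly $k$ positive vertices we may restrict attention to solutions with at most $k$ positive vertices. In both algorithms, as soon as every remaining edge has weight $1$ the instance is an ordinary \textsc{Vertex Cover} instance, which I would solve with the known $O^*(1.274^k)$-time algorithm; since $1.274<1.619$ this subroutine is never the bottleneck. The key bookkeeping observation is that a branching step which sets $p_u=w>0$ for a vertex $u$ (and then deletes $u$, whose incident edges are now covered, possibly propagating lower bounds to its neighbours via $Adjust$) consumes exactly one unit of this budget, while a step which sets $p_u=0$ forces all $\ge 1$ neighbours of $u$ to become positive and hence also consumes at least one unit (we declare failure when the budget goes negative, and maintain the set of already-committed positive vertices to avoid double counting).

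\textbf{PVC.} For \textsc{PVC} I would reuse Algorithm 1 and Theorem~\ref{thm:alg1}, but re-examine its branching step with the vertex count as the measure. After exhaustively applying (RR1), every maximum-weight edge (of weight $M\ge 2$) is adjacent to another edge of weight $M$; pick such a pair $e=(u,v)$ and $f=(u,v')$. Since $M$ is the global maximum weight, setting $p_u=M$ covers \emph{all} edges at $u$, so by Property~\ref{prop1}, in any optimal solution either $p_u\ge M$, and then we may assume $p_u=M$, or $p_u<M$, which forces both $p_v\ge M$ and $p_{v'}\ge M$ and then we may assume $p_v=p_{v'}=M$. This gives a branching rule: delete $u$ and decrease the budget by $1$, or delete $v$ and $v'$ and decrease the budget by $2$. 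The resulting recurrence $T(k)=T(k-1)+T(k-2)$ has solution $O^*(\phi^k)$ with $\phi=(1+\sqrt{5})/2$ the golden ratio, and rounding $\phi$ up (and absorbing the cheaper \textsc{VC} subroutine) yields the claimed $O^*(1.619^k)$.

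\textbf{DPVC.} The ``adjacent maximum-weight edge'' trick breaks for \textsc{DPVC}, since the two directed weights of the heavy edges need not be aligned, so one cannot pin a vertex to a single value for free. Instead I would use the fact that there is an optimal solution in which every positive vertex $c$ has $p_c=w_{c,z}$ for some neighbour $z$ (namely the heaviest, from $c$'s side, edge that $c$ is actually used to cover; if $c$ covers no edge via itself we could take $p_c=0$). The algorithm is then a search tree of depth at most $k$: at each node, pick a vertex $u$ incident to an uncovered edge and branch over its value, trying $p_u=0$ and $p_u=w_{u,z}$ for each neighbour $z$. To keep the branching factor at $O(k)$, I would first apply a Buss-type reduction in the spirit of the quadratic kernel: if $u$ has more than $k$ neighbours (with $k$ the remaining budget) then $u$ must be positive, and moreover at least two of its neighbours must be left at value $0$, which forces $p_u$ to be at least the $k$-th largest of its incident weights; hence only the (at most $k$) largest incident weights of $u$ are candidate values. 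After this reduction the branched-on vertex has degree $O(k)$ in either case, so every node has at most $k+1$ children and the tree has depth $O(k)$, giving a running time of $(k+1)^k n^{O(1)}=O^*(k^k)$.

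\textbf{Main obstacle.} The delicate part is the budget accounting: proving that each branching rule is exhaustive and that every branch genuinely commits a \emph{new} positive vertex, so that the measure drops by exactly the claimed amount. This relies on the ``WLOG'' value normalisations above and on a careful analysis of how (RR1)--(RR3), $Adjust$ and $Set$ interact with vertices that have already been committed. For \textsc{DPVC} the extra subtlety is the kernelisation step: unlike for \textsc{VC}, forcing a high-degree vertex into the solution does not finish it off, and one must simultaneously bound by $O(k)$ the number of power values worth trying for that vertex and check that this bound survives the ongoing $Adjust$ operations on the residual weights; I expect establishing this to be the technical heart of the \textsc{DPVC} algorithm.
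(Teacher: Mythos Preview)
Your \textsc{PVC} argument is correct and essentially identical to the paper's: adapt Algorithm~1, observe that the branch ``$p_u=M$'' commits one positive vertex while the branch ``$p_v=p_{v'}=M$'' commits two, and get the golden-ratio recurrence $T(k)=T(k-1)+T(k-2)$.

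Your \textsc{DPVC} argument, however, has a real gap in the depth bound. You branch on a vertex $u$ and include the option $p_u=0$, claiming this ``forces all $\ge 1$ neighbours of $u$ to become positive and hence also consumes at least one unit''. But if every neighbour of $u$ is \emph{already} committed positive (from earlier $Adjust$/$Set$ steps), the branch $p_u=0$ commits nobody new: the budget does not drop, yet you have spent a branching step. Concretely, on a path $a\!-\!b\!-\!c\!-\!d$, first branching $p_c=0$ marks $b,d$; then branching $p_a=0$ marks only $b$, which was already marked. Since uncommitted vertices can be $\Theta(n)$ many, the tree depth is no longer bounded by $k$, and the $(k+1)^k$ bound collapses. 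You flag the budget accounting as the ``main obstacle'', but the proposal does not contain the idea that resolves it.

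The paper avoids this by branching on an \emph{edge} rather than a vertex, and never using a zero-value branch. Pick $(u,v)$ with $w_{u,v}=\max_{z\in N(u)} w_{u,z}$. Either $u$ covers this edge, in which case $p_u$ may be taken equal to this maximum and $Set(u,w_{u,v})$ removes $u$ entirely (one branch); or $v$ covers it, in which case $p_v$ is one of the at most $d(v)\le k$ weights $w_{v,z}\ge w_{v,u}$, and for each such value $Set(v,w_i)$ removes $v$. Every branch therefore removes a vertex with strictly positive power, so $k$ drops by one in \emph{every} branch, the tree has arity at most $k+1$ and depth at most $k$, giving $O^*(k^k)$. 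The high-degree reduction you describe is the same as the paper's and is used in the same way, to guarantee $d(v)\le k$ before branching.
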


\begin{proof}
For {\sc PVC}, Algorithm 1 can be easily adapted to deal with parameterization by $k$: when branching, $k$ is reduced by 1 in the first branch, and by 2 in the second branch, so we have a branching factor $(-1,-2)$ in the worst case. 

The case of {\sc DPVC} is a bit more involved, but we can show that it is FPT as well. We will perform procedures $Adjust$ and $Set$ (without the modifications on $P$ of course), but we have to pay attention to the way we count the number of vertices in the solution (vertices with strictly positive power). Indeed, if we perform several $Adjust(u,w)$ on the same vertex $u$, we have to count it (decrease $k$) only once. To do this, in the algorithm we decrease $k$ by one when a vertex (with positive weight) is removed from the graph. Then, when performing $Adjust(u,w)$, we mark the vertex $u$ (but do not decrease $k$ by one yet) - initially no vertex is marked. When a marked vertex becomes of degree 0, then we remove it from the graph and decrease $k$ by one. Also, when performing an operation $Set(u,w)$ ($u$ is removed), so we decrease $k$ by 1  if $w>0$ or if $u$ is marked. 

Now, suppose first that there is a vertex $u$ of degree at least $k+1$. To get a solution with at most $k$ vertices, necessarily $u$ must receive a positive power. More precisely, it must cover all but at most $k$ edges incident to it. Then in any solution with at most $k$ vertices, $p_u\geq w_{k+1}$ if we order the weights $w_{u,v},v\in N(u)$, in non increasing order $w_1\geq w_2\geq \dots \geq w_d$. So in this case, we can safely apply $Adjust(u,w_{k+1})$. In the remaining instance, $u$ has degree at most $k$.

We apply this reduction while possible, and get afterwards an instance where each vertex has degree at most $k$. Then:
\begin{itemize}
\item Take an edge $(u,v)$ such that $w_{u,v}=\max\{w_{u,z}:z\in N(u)\}$. Let $\{w_1,w_2,\dots,w_s\}$ be the set of weights $w_{v,z},z\in N(v)$ which are greater than or equal to $w_{v,u}$.
\item Perform a branching with $s+1$ branches as follows: either $Set(u,w_{u,v})$ (and recurse), or $Set(v,w_i)$ for each $i=1,\dots,s$ (and recurse).
\end{itemize}

Choosing $u$ to cover the edge $(u,v)$ corresponds to the first branch. Otherwise, we have to use $v$ to cover this edge. We consider all the possible weights that could be assigned to $v$ in an (optimal) solution. In each branch at least one vertex with positive power is removed ($u$ in the first branch, $v$ in the $s$ others) so $k$ reduces by at least one. Since $s\leq k$, we have a tree of arity at most $k+1$ and depth at most $k$.
\end{proof}

Following Theorem~\ref{theop}, a natural question is whether {\sc DPVC} is solvable in single exponential time with respect to $k$ or not. This does not seem obvious. In particular, it is not clear whether {\sc DPVC} is solvable in single exponential time {\it with respect to the number of vertices $n$}, since the simple brute-force algorithm which guesses the value of each vertex needs $n^{O(n)}$ time.

Interestingly, though we are not able to resolve these questions, we can show that they are actually equivalent.

\begin{theorem} \label{thm:dpvcexp}If there exists an $O^*(\gamma^n)$-time algorithm for \textsc{DPVC}, for some constant $\gamma>1$, then
there exists an $O^*((4\gamma)^k)$-time algorithm for \textsc{DPVC}.

\end{theorem}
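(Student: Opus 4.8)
The plan is to show that an $O^*(\gamma^n)$ algorithm for \textsc{DPVC} can be bootstrapped into an algorithm parameterized by $k$, the number of vertices receiving positive power, by first reducing the number of vertices of the instance to $O(k)$ at the cost of only a $(4\gamma)^k$ blowup. The key observation is that vertices which will certainly receive value $0$ in any solution with at most $k$ positive-power vertices can be handled by branching, since each such vertex, if not put to $0$, would have to help cover its incident edges. Concretely, I would first run the quadratic-kernel reduction for \textsc{DPVC} mentioned in the introduction (a Buss-style kernel on the parameter $k$), which already bounds the number of ``high-degree'' vertices and, together with a bound on the number of edges, gives an instance on $O(k^2)$ vertices. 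Applying the hypothetical $O^*(\gamma^n)$ algorithm directly would give $O^*(\gamma^{k^2})$, which is too slow; so the kernel alone is not enough, and a branching step is needed to cut the vertex count down to linear in $k$.

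The heart of the argument is a branching procedure that reduces the vertex count. After kernelization, consider any vertex $u$ of degree at least some constant (say $3$). In a solution with at most $k$ positive-power vertices, either $u$ has positive power, or all neighbors of $u$ have positive power. I would branch accordingly: in one branch commit $u$ to the solution (perform an $Adjust$/$Set$-type operation that covers $u$'s edges via $u$, decreasing $k$ by one), and in the other branch commit to covering all of $u$'s edges from the other side (which, if $u$ has degree $d\ge 3$, forces $d\ge 3$ of its neighbors into the solution, decreasing $k$ by at least $d$). This gives a branching vector like $(-1,-d)$ with $d\ge 3$, hence a branching number bounded by a constant well below $4$ — in fact a $(-1,-3)$ vector has branching number $\approx 1.47$, and one checks that $4\gamma$ comfortably dominates the product of this branching factor with $\gamma$ raised to the residual instance size. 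After exhaustively branching on all such vertices, the remaining instance has maximum degree $2$, i.e.\ it is a disjoint union of paths and cycles, on which \textsc{DPVC} is solvable in polynomial time by dynamic programming (or even directly, since the number of relevant power levels per vertex is then $O(1)$). Summing over the $O^*((\text{const})^k)$ leaves of the branching tree, each of which is solved in polynomial time, and absorbing the call to the $\gamma^n$ algorithm on the leaves where $n = O(k)$, yields the claimed $O^*((4\gamma)^k)$ bound: the branching tree has at most $4^k$ leaves (using the crude bound that a $(-1,-d)$ vector with $d\ge 1$ branches into at most $2$ subproblems, giving $2^k\le 4^k$ leaves, or more carefully $c^k$ for a small $c$), and on each leaf we invoke the $\gamma^n$ algorithm with $n=O(k)$, giving $\gamma^{O(k)}$ per leaf; combining constants gives $(4\gamma)^k$ up to polynomial factors.

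The bookkeeping subtlety — exactly as flagged in the proof of Theorem~\ref{theop} — is that the parameter $k$ counts \emph{vertices} with positive power, not total power, so when we perform several $Adjust$ operations on the same vertex we must decrement $k$ only once. I would reuse the marking scheme from Theorem~\ref{theop}: mark a vertex the first time it is touched by an $Adjust$, and only decrement $k$ when a marked vertex is actually removed (becomes isolated) or when a $Set$ with positive value is applied. This ensures the branching-vector analysis above is legitimate with respect to the parameter $k$. I expect the main obstacle to be making the branching step reduce the instance to $O(k)$ vertices \emph{cleanly} while respecting this marking convention — in particular, arguing that after the branching the leftover max-degree-$\le 2$ instance genuinely has $O(k)$ vertices (the kernel bounds edges by $O(k^2)$, but a union of paths/cycles with $O(k^2)$ vertices is still too big, so one needs that vertices of degree $\le 2$ lying outside the kernel's ``core'' can be contracted away, or that the kernel already bounds the number of degree-$\ge 1$ vertices appropriately), and then verifying that the constant from the branching number, times $\gamma$, stays under $4\gamma$. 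The degree-reduction / contraction argument on low-degree vertices is the step I would write most carefully; the rest is routine branching analysis.
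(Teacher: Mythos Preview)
Your proposal has a genuine gap, and it is precisely the step you flag as the ``main obstacle'': the branching rule you describe does not work for \textsc{DPVC}. The dichotomy ``either $u$ has positive power, or all neighbors of $u$ have positive power'' is correct, but in the first branch you cannot simplify the instance. Deciding that $p_u>0$ tells you nothing about which edges $u$ covers, so you cannot remove $u$ or any of its incident edges; you have decremented $k$ by one but the graph is unchanged and $u$ still has degree $\ge 3$. If instead you mean $Set(u,\max_z w_{u,z})$ in this branch, that is simply wrong: an optimal solution may assign $u$ a strictly smaller positive value, and you would miss it. (This is exactly the difficulty that forces the $O^*(k^k)$ algorithm in Theorem~\ref{theop} to enumerate all candidate power levels for a vertex.) In the second branch, note also that some of the $d$ neighbors may already be marked, so $k$ need not drop by $d$. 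Finally, observe that if your argument worked it would yield an $O^*(c^k)$ algorithm for \textsc{DPVC} for a fixed constant $c$ without ever invoking the hypothesised $\gamma^n$ algorithm --- which is stronger than what the theorem claims and is stated just before the theorem as an open question.

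The paper takes a different route. It maintains a partition $V=I\cup C_1\cup C_2$ with $I$ heading toward an independent set, $C_1\cup C_2$ heading toward a vertex cover of size $\le k$, and no edges between $C_2$ and $I$. Branching on an edge inside $I$ moves an endpoint into $C_1$; once $I$ is independent, one picks $u\in C_1$ with a maximum-weight edge $w_{u,v}$ into $I$ and branches on whether $p_u\ge w_{u,v}$ (which, being the maximum, kills \emph{all} edges from $u$ to $I$ and promotes $u$ to $C_2$) or $p_v\ge w_{v,u}$ (which moves $v$ into $C_1$). Every branching step strictly increases the potential $|C_1\cup C_2|+|C_2|\le 2k$, giving at most $4^k$ leaves. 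At each leaf the positive-power vertices all lie in $C_1\cup C_2$, a set of size at most $k$, and one invokes the assumed $O^*(\gamma^n)$ algorithm on this $k$-vertex instance. The crucial difference from your attempt is that the paper never branches on ``$u$ has positive power'' in isolation; it always branches on a concrete threshold that lets it delete edges in both branches.
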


\begin{proof}

We describe a branching algorithm, which eventually needs to solve an instance
of \textsc{DPVC} on $k$ vertices. We maintain three sets $C_1,C_2,I\subseteq V$.
Initially, $I=V$ and $C_1,C_2=\emptyset$. The informal meaning is that we want
to branch so that $I$ eventually becomes an independent set and $C_1\cup C_2$ a
vertex cover of $G$. In addition, we want to maintain the property that there
are no edges between $C_2$ and $I$ (which is initially vacuously true).

As long as $|C_1\cup C_2|\le k$ and there are still edges incident on $I$ we
repeat the following rules:

\begin{enumerate}

\item If there is a vertex $u\in C_1$ such that there are no edges from $u$ to
$I$ we set $C_2:=C_2\cup \{u\}$ and $C_1:=C_1\setminus\{u\}$.

\item If there is an edge $(u,v)$ induced by $I$ we branch on which of the two
vertices covers it in the optimal solution. In one branch, we perform
$Adjust(u,w_{u,v})$, and set $I:=I\setminus\{u\}$ and $C_1:=C_1\cup\{u\}$. The
other branch is symmetric for $v$.

\item If $I$ is an independent set, we select a vertex $u\in C_1$. Because of
Step 1, we know that there are some edges connecting $u$ to $I$. Let $w_{u,v}$
be the maximum weight among edges connecting $u$ to $I$. We branch between two
choices: either in the optimal $p_u\ge w_{u,v}$ or $p_v\ge w_{v,u}$. In the first
case we $Adjust(u,w_{u,v})$ (observe that this removes all edges connecting $u$
to $I$, therefore we can immediately apply Rule 1). In the second case, we
$Adjust(v,w_{v,u})$ and set $|I|:=I\setminus\{v\}$ and $C_1:=C_1\cup\{v\}$.

\end{enumerate}

It is not hard to see that the branching described above cannot produce more
than $4^k$ different outcomes. To see this, observe that any branching
performed either increases $|C_1\cup C_2|$ or $|C_2|$, while never decreasing
either quantity.  Therefore, the quantity $|C_1\cup C_2|+|C_2|$ increases in
each branching step. However, this quantity has maximum value $2k$.

The branching algorithm above will stop either when $I$ has no incident edges,
or when $|C_1\cup C_2|=k$. In the latter case, we have selected that $C_1\cup C_2$
is the set of vertices that take positive values in our solution, so we can
perform $Adjust(u,w_{u,v})$ for any $u\in C_1$ and $v\in I$, until no edges are
incident on $I$. We can now delete all the (isolated) vertices of $I$, and we
are left with a $k$-vertex \textsc{DPVC} instance on $C_1\cup C_2$, on which we use the
assumed exponential-time algorithm. 
\end{proof}

\section{Kernelization and linear programming}

Moving to the subject of kernels, we first notice that the same technique as for {\sc VC} gives a quadratic (vertex) kernel for {\sc DPVC} when the parameter is $k$ (and therefore also when the parameter is $P$):

\begin{theorem} \label{thm:kernel}

There exists a kernelization algorithm for \textsc{DPVC} that produces a kernel with $O(k^2)$ vertices.

\end{theorem}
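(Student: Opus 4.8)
The plan is to adapt the classical Buss kernelization argument for \textsc{Vertex Cover} to the directed, weighted setting. Recall that for an instance of \textsc{DPVC} parameterized by $k$, the goal is to decide whether there is a feasible power assignment in which at most $k$ vertices receive strictly positive value. The key structural fact is that, just as in \textsc{VC}, a vertex of ``high degree'' must receive positive power, and so is forced into the set of $k$ active vertices. More precisely, I would first observe that if a vertex $u$ has degree at least $k+1$, then in any solution with at most $k$ active vertices $u$ must be active: otherwise all of the at least $k+1$ edges incident on $u$ would need to be covered by their other endpoints, which would be at least $k+1$ distinct active vertices, a contradiction. So the first reduction rule is: while some vertex has degree greater than $k$, mark it as forced-active, decrement $k$ by one, and (as in the proof of Theorem~\ref{theop}) remove it by performing the appropriate $Adjust$ operation on $u$ using the $(k+1)$-st largest weight among its incident edges (so that at most $k$ edges remain incident on it) — in fact for kernelization purposes it suffices to simply delete $u$ and record that it was selected, since we only need to bound the instance size.

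After this rule is no longer applicable, every vertex has degree at most $k$. The second step mirrors the classical argument: in any YES-instance the at most $k$ active vertices cover all edges, and each covers at most $k$ of them (bounded degree), so the total number of edges is at most $k^2$. If more edges remain, we answer NO. Finally, any vertex of degree $0$ can be deleted (it trivially receives power $0$ and is irrelevant), so every remaining vertex is incident to one of the at most $k^2$ surviving edges, giving at most $2k^2 = O(k^2)$ vertices. One small point to handle carefully is the bookkeeping between the original parameter and the reduced one: the first rule decreases $k$, so the loop must terminate (each application strictly decreases $k$, and if $k$ drops below $0$ we answer NO), and correctness of the equivalence between the original and reduced instances follows exactly as in the analysis of the branching algorithm of Theorem~\ref{theop}, where the same $Adjust$-based reasoning for forced high-degree vertices was already justified.

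The main subtlety — and the only place where \textsc{DPVC} differs nontrivially from \textsc{VC} — is that deleting a forced vertex $u$ is not a ``clean'' deletion: covering an edge $(u,z)$ with $u$ may leave a residual demand on $z$ for some \emph{other} edge, which is precisely what the $Adjust(u,\cdot)$ operation and the resulting weight modifications encode. I expect verifying that these weight adjustments preserve YES/NO status — i.e.\ that an optimal solution of the reduced instance, together with the recorded forced vertices, yields a valid solution of the original instance with the same count of active vertices, and conversely — to be the one step requiring genuine care, though it is essentially the same computation already carried out in the proof of Theorem~\ref{theop} and in Property~\ref{prop2}/Property~\ref{prop3}. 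Everything else is the standard Buss counting argument, and the resulting kernel has $O(k^2)$ vertices as claimed.
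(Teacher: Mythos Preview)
Your overall approach matches the paper's: apply $Adjust(u,w_{k+1})$ to vertices of degree at least $k+1$, remove isolated vertices, and finish with the Buss counting argument. The paper's proof is essentially the two bullet rules you describe, followed by the bound $n'\le k(k+1)$.

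There is, however, one genuine error in your aside that ``for kernelization purposes it suffices to simply delete $u$ and record that it was selected''. This is false for \textsc{DPVC}. Unlike in \textsc{VC}, knowing that $u$ must receive positive power does not determine \emph{which} of $u$'s incident edges are covered by $u$: that depends on the (unknown) power level $p_u$. If you delete $u$ outright, the edges $(u,v_i)$ with $w_{u,v_i}$ exceeding $p_u$ vanish along with their constraints on the $v_i$, and the optimal value of the reduced instance no longer determines the optimal value of the original. This is precisely why the paper keeps $u$ in the graph after $Adjust(u,w_{k+1})$, marked but not removed, so that the up to $k$ surviving edges incident on $u$ remain in play; $u$ is removed (and $k$ decremented) only when it becomes isolated. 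Your final paragraph in fact correctly identifies this subtlety and points to $Adjust$ as the fix, so the error is confined to the parenthetical, but you should drop that claim entirely.

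A smaller bookkeeping discrepancy: you decrement $k$ immediately upon marking, whereas the paper decrements only when a marked vertex is removed. Your convention is workable, but it requires care: the high-degree rule must then apply only to \emph{unmarked} vertices (otherwise the same vertex can be charged repeatedly as $k$ shrinks), and the degree bound used in the final counting must be the $k$ in force when each vertex was last adjusted, not the final $k$. The paper's ``mark now, decrement on removal'' convention sidesteps these issues and yields the clean bound $n'\le k(k+1)$ directly.
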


\begin{proof}

Consider the following rules:

\begin{itemize}
\item If there exists a vertex $u$ of degree at least $k+1$, apply the procedure $Adjust(u,w_{k+1})$ as previously (and mark the vertex).
\item If there exists a marked vertex of degree 0, remove it and do $k\leftarrow k-1$. If there is an unmarked vertex of degree 0, remove it.
\end{itemize}
Let $G'$ be the graph obtained after this procedure. If $n'>k(k+1)$ return NO.
Else, $n'\leq (k+1)^2$ which is a kernel.

Suppose that the answer is YES. Then there exists a vertex cover of size at most $k$, and each vertex has at most $k$ neighbors, and there is no isolated vertex, so $n'\leq k(k+1)$. Hence, this gives a kernel of size $O(k^2)$ for {\sc DPVC} when parameterized by $k$ (so a kernel of size $O(P^2)$ when parameterized by $P$). 
\end{proof}

We observe that the above theorem gives a \emph{bi-kernel} also for \textsc{PVC}. We leave it as an open question whether a pure quadratic kernel exists for \textsc{PVC}.

Let us now consider the question whether the kernel of Theorem \ref{thm:kernel}
could be improved to linear.  A way to reach a linear kernel for {\sc VC} is by
means of linear programming. We consider this approach now and show that it
seems to fail for the generalization we consider here.
Let us consider the following ILP formulation for {\sc DPVC}, where we have one variable per vertex ($x_i$ is the power of $u_i$), and one variable $x_{i,j}$ ($i<j$) per edge $(u_i,u_j)$. $x_{i,j}=1$ (resp. 0) means that $u_i$ (resp $u_j$) covers the edge.

\begin{eqnarray*}
\left\{ \begin{array}{ll}
 \min \ \sum_{i=1}^n x_i\\
s.t. \left | \begin{array}{llll}
x_i & \geq & w_{i,j}x_{i,j}, &\forall (u_i,u_j)\in E, i<j\\
x_j & \geq & w_{j,i}(1-x_{i,j}), &\forall (u_i,u_j)\in E, i<j\\
x_{i,j}& \in &\{0,1\}, &\forall (u_i,u_j)\in E, i<j\\
x_i & \geq & 0, &\forall i=1,\cdots,n
\end{array}\right.
\end{array}\right.
\end{eqnarray*}

Can we use the relaxation of this ILP to get a linear kernel? Let us focus on {\sc PVC}, where the relaxation can be written in an equivalent simpler form\footnote{A solution of the relaxation of the former is clearly a solution of the latter. Conversely, if $x_i+x_j\geq w_{i,j}$, set $x_{i,j}=x_i/w_{i,j}$ to get a solution of the former.}:

\begin{eqnarray*}
\left\{ \begin{array}{ll}
 \min \ \sum_{i=1}^n x_i\\
s.t. \left | \begin{array}{rlll}
x_i+x_j & \geq & w_{i,j}, & \forall (u_i,u_j)\in E, i<j\\
x_i & \geq & 0, &\forall i=1,\cdots,n
\end{array}\right.
\end{array}\right.
\end{eqnarray*}

Let us call $RPVC$ this LP. We can show that, similarly as for {\sc VC}, $RPVC$
has the semi-integrality property:  in an optimal (extremal) solution $x^*$,
$2x^*_i\in \mathbb{N}$ for all $i$. However, we {\it cannot} remove vertices
receiving value 0, as in the case of VC. Indeed, there  does exist vertices
that receive weight 0 in the above relaxation which are in {\it any} optimal
(integer) solution.  To see this, consider two edges $(u_1,v_1)$ and
$(u_2,v_2)$ both with weight 2, and a vertex $v$ adjacent to all 4 previous
vertices with edges of weight 1. Then, there is only one optimal fractional
solution, with $p_{u_1}=p_{v_1}=p_{u_2}=p_{v_2}=1$, and $p_v=0$. But any
(integer) solution has value 5 and gives power 2 to $u_1$ or $v_1$, to $u_2$ or
$v_2$, and weight 1 to $v$.  The difficulty is actually deeper, since we have
the following.

\begin{theorem} \label{thm:NPCILP}
The following problem is NP-hard: given an instance of {\sc PVC}, an optimal (extremal) solution $x^*$ of $RPVC$ and $i$ such that $x^*_i=0$, does there exists an optimal (integer) solution of \textsc{PVC} not containing $v_i$?
\end{theorem}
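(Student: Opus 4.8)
The plan is to reduce from a known NP-hard problem — a natural candidate is {\sc Vertex Cover} itself, or rather the decision problem ``does $G$ have a vertex cover of size $\le \ell$'', exploiting the same gadget idea that appears in the discussion preceding the statement. The key observation is that the instance constructed there (two weight-$2$ edges $(u_1,v_1),(u_2,v_2)$ plus a vertex $v$ of weight-$1$ edges to all four endpoints) already exhibits the phenomenon: the unique optimal fractional solution sets $x^*_v=0$, yet every integral optimum must put weight $1$ on $v$. First I would show how to amplify this ``forcing'' behaviour so that, starting from an arbitrary {\sc VC} instance $G=(V,E)$ with target $\ell$, I can build a {\sc PVC} instance $H$ together with a designated vertex $v^\star$ and an extremal optimal fractional solution $x^\star$ of $RPVC$ with $x^\star_{v^\star}=0$, such that $H$ has an integral optimum avoiding $v^\star$ \emph{iff} $G$ has a vertex cover of size $\le\ell$.

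The construction I would use: take $G$, put weight $2$ on every original edge, and add a new vertex $v^\star$ joined by weight-$1$ edges to \emph{every} vertex of $V$; possibly pad with a controlled number of disjoint weight-$2$ edges as in the example to pin down the fractional optimum. On such a graph the fractional optimum is forced to assign $x_u=1$ to every $u\in V$ and $x_{v^\star}=0$ (each original edge is tight with weight $2$, each new edge has slack $1+0\ge 1$), and one must verify this is the unique extremal optimum, which makes $x^\star$ well-defined and efficiently computable. Now consider integral solutions. Any integral solution must give every vertex of $V$ power $\ge 1$ (to cover its edge to $v^\star$, unless $v^\star$ itself has power $\ge 1$), and must give power $\ge 2$ to at least one endpoint of each original weight-$2$ edge, i.e.\ the set $S=\{u\in V: p_u\ge 2\}$ must be a vertex cover of $G$. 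If we are \emph{forbidden} from using $v^\star$ (i.e.\ we seek an optimum with $p_{v^\star}=0$), the cheapest such solution has value $|V| + |S|$ where $S$ is a minimum vertex cover of $G$; if we \emph{allow} $p_{v^\star}\ge 1$, we can instead set $p_{v^\star}=1$, giving some savings on the vertices adjacent to $v^\star$ — here I would calibrate the weights/padding so that using $v^\star$ is beneficial exactly when the minimum vertex cover exceeds $\ell$, and non-beneficial (there is an equally good $v^\star$-free solution) exactly when it is $\le\ell$. Concretely one wants: $\OPT(H)$ is the same whether or not $v^\star$ is allowed, \emph{and} this common value is achievable without $v^\star$, precisely when $\tau(G)\le\ell$.

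The main obstacle I anticipate is the calibration in the last step: I need the \emph{global} optimum value of $H$ to coincide with the best $v^\star$-free value in the YES case while being strictly smaller (so that \emph{every} optimum uses $v^\star$) in the NO case — and simultaneously keep $x^\star_{v^\star}=0$ in the unique fractional optimum regardless of $\ell$. A single copy of $v^\star$ joined to all of $V$ only saves a bounded amount, so I expect to need either several ``$v^\star$-type'' vertices, or to weight $v^\star$'s edges and the original edges carefully (e.g.\ scale original edge weights up, so that forbidding $v^\star$ costs $|V|$ while one unit on $v^\star$ saves a whole extra vertex's worth), so that the threshold lands exactly at $\ell$. Once this balancing is set up, correctness in both directions is a short argument: YES $\Rightarrow$ take a min vertex cover, power $2$ on it, power $1$ elsewhere in $V$, power $0$ on $v^\star$, matching $\OPT$; NO $\Rightarrow$ any $v^\star$-free solution costs strictly more than the solution that activates $v^\star$, so no optimum avoids $v^\star$. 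Since the reduction is clearly polynomial and $x^\star$ is computed by solving $RPVC$ (an LP) and picking a vertex of the optimal face, NP-hardness follows.
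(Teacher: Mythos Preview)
Your proposal has two genuine gaps, both at the very points you flag as delicate.

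\textbf{The fractional optimum is not what you claim.} Take $G=K_{1,n-1}$ with center $c$ and leaves $l_1,\dots,l_{n-1}$. In your instance $H$ the assignment $x_c=2$, $x_{l_i}=0$, $x_{v^\star}=1$ is feasible with value $3$, whereas your proposed solution $x_u=1,\ x_{v^\star}=0$ has value $n$. In general your all-ones assignment is optimal only when $G$ has a perfect fractional matching, and even then uniqueness (needed so that the extremal $x^\star$ has $x^\star_{v^\star}=0$) is not automatic.

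\textbf{The ``calibration'' you postpone is the entire proof.} Even granting the fractional claim, your own cost analysis gives: the best integral solution avoiding $v^\star$ has value $|V|+\tau(G)$, while the best one using $v^\star$ has value $1+2\tau(G)$. Hence there is an integral optimum avoiding $v^\star$ iff $\tau(G)\ge |V|-1$, i.e.\ iff $G$ is a clique --- decidable in polynomial time. So without the calibration the reduction proves nothing, and any gadgetry that shifts this threshold to an arbitrary $\ell$ must do so \emph{without} disturbing the fractional optimum; you give no indication of how to achieve that.

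The paper avoids both difficulties by a different construction and a different source problem. It does \emph{not} keep the edges of $G$; instead, for each edge $e=(u,v)$ of $G$ it introduces two new vertices $v'_e,v''_e$ joined by a weight-$2$ edge, and connects $u$ to $v'_e$ and $v$ to $v''_e$ with weight-$1$ edges. Summing the weight-$2$ constraints alone yields $\sum_e(x_{v'_e}+x_{v''_e})\ge 2|E|$, which forces the \emph{unique} fractional optimum to put value $0$ on every original vertex of $V$ --- no structural assumption on $G$ is needed. On the integral side, optimal solutions have value $2|E|+|C|$ where $C\subseteq V$ is precisely a minimum vertex cover of $G$. Consequently, for a given $u\in V$, there is an optimal \textsc{PVC} solution with $p_u=0$ iff there is a minimum vertex cover of $G$ avoiding $u$. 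The paper reduces from this last problem (known to be NP-hard) rather than from the decision version of VC with a target $\ell$, so no calibration is required at all.
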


\begin{proof}
Take a graph $G=(V,E)$ which is an instance of VC. We build the following instance $G'=(V',E')$ of {\sc PVC}:
\begin{itemize}
    \item There is one vertex for each vertex in $V$, and two vertices $v'_e$ and $v_e''$ for each edge $e$ in $E$
    \item $v'_e$ and $v''_e$ are linked with an edge of weight 2;
    \item If $e=(u,v)$ is an edge of $G$, then $u$ is adjacent to $v'_e$ and $v$ is adjacent to $v_e''$ in $G'$, both edges receiving weight 1.
\end{itemize}
As previously, an optimal solution in the relaxation {\sc RPVC} is to give power 1 to each ``edge-vertex'' ($v'_e,v_e''$), and 0 to vertices corresponding to vertices in $V$.

Then, consider an optimal solution $C'$ of {\sc PVC}: for each $e\in G$, $p_{v'_e}=2$ or $p_{v_e''}=2$ - to cover the edge. Moreover, if say $p_{v'_e}=2$, then we can fix $p_{v_e''}=0$: indeed, there is only one more edge incident to $v_e''$ to cover, so
we shall put power 1 to the other neighbor ($v$) of $v_e''$. So $C'=W\cup V'$, where $W$ is made of $|E|$ ``edge-vertices'', and $V'$ is made of vertices of $V$. To be feasible, $V'$ must be a vertex cover of $G$. Conversely, if $V'$ is a vertex cover of $G$, then we can easily add a set $W$ of $|E|$ ``edge-vertices'' to get a feasible solution of {\sc PVC}.

Then, a vertex in $V$ is in an optimal VC of $G$ iff it is in an optimal solution of \textsc{PVC}. Since it receives weight 0 in the (unique) optimal solution of the relaxation, the result follows. 
\end{proof}

\acknowledgements
\label{sec:ack}
We thank an anonymous referee for his/her pertinent comments and suggestions.

\bibliographystyle{abbrvnat}
\bibliography{biblio}

\begin{thebibliography}{16}
\providecommand{\natexlab}[1]{#1}
\providecommand{\url}[1]{\texttt{#1}}
\expandafter\ifx\csname urlstyle\endcsname\relax
  \providecommand{\doi}[1]{doi: #1}\else
  \providecommand{\doi}{doi: \begingroup \urlstyle{rm}\Url}\fi

\bibitem[Angel et~al.(2015)Angel, Bampis, Chau, and Kononov]{Angel2015}
E.~Angel, E.~Bampis, V.~Chau, and A.~Kononov.
\newblock Min-power covering problems.
\newblock In \emph{ISAAC 2015}, volume 9472 of \emph{LNCS}, pages 367--377.
  Springer, 2015.

\bibitem[Angel et~al.(2016)Angel, Bampis, Escoffier, and
  Lampis]{DBLP:conf/wg/AngelBEL16}
E.~Angel, E.~Bampis, B.~Escoffier, and M.~Lampis.
\newblock Parameterized power vertex cover.
\newblock In \emph{{WG} 2016}, volume 9941 of \emph{LNCS}, pages 97--108.
  Springer, 2016.

\bibitem[Bodlaender and Koster(2008)]{BK08}
H.~L. Bodlaender and A.~M. C.~A. Koster.
\newblock Combinatorial optimization on graphs of bounded treewidth.
\newblock \emph{Comput. J.}, 51\penalty0 (3):\penalty0 255--269, 2008.

\bibitem[Chen et~al.(2010)Chen, Kanj, and Xia]{Chen05}
J.~Chen, I.~A. Kanj, and G.~Xia.
\newblock Improved upper bounds for vertex cover.
\newblock \emph{Theoretical Computer Science}, 411\penalty0 (40–42):\penalty0
  3736 -- 3756, 2010.

\bibitem[Chleb{\'{\i}}k and Chleb{\'{\i}}kov{\'{a}}(2008)]{Chleb08}
M.~Chleb{\'{\i}}k and J.~Chleb{\'{\i}}kov{\'{a}}.
\newblock Crown reductions for the minimum weighted vertex cover problem.
\newblock \emph{Discrete Applied Mathematics}, 156\penalty0 (3):\penalty0
  292--312, 2008.

\bibitem[Cygan(2012)]{Cygan12}
M.~Cygan.
\newblock Deterministic parameterized connected vertex cover.
\newblock In \emph{{SWAT 2012}}, volume 7357 of \emph{LNCS}, pages 95--106.
  Springer, 2012.

\bibitem[Dom et~al.(2008)Dom, Lokshtanov, Saurabh, and Villanger]{DomLSV08}
M.~Dom, D.~Lokshtanov, S.~Saurabh, and Y.~Villanger.
\newblock Capacitated domination and covering: {A} parameterized perspective.
\newblock In \emph{{IWPEC 2008}}, volume 5018 of \emph{LNCS}, pages 78--90.
  Springer, 2008.

\bibitem[Fellows(2003)]{Fel03}
M.~R. Fellows.
\newblock Blow-ups, win/win's, and crown rules: Some new directions in {FPT}.
\newblock In \emph{{WG 2003}}, volume 2880 of \emph{LNCS}, pages 1--12.
  Springer, 2003.

\bibitem[Guo et~al.(2007)Guo, Niedermeier, and Wernicke]{GuoNW07}
J.~Guo, R.~Niedermeier, and S.~Wernicke.
\newblock Parameterized complexity of vertex cover variants.
\newblock \emph{Theory Comput. Syst.}, 41\penalty0 (3):\penalty0 501--520,
  2007.

\bibitem[Lampis(2011)]{Lampis11}
M.~Lampis.
\newblock A kernel of order 2 k-c log k for vertex cover.
\newblock \emph{Inf. Process. Lett.}, 111\penalty0 (23-24):\penalty0
  1089--1091, 2011.

\bibitem[Lampis(2014)]{Lampis14}
M.~Lampis.
\newblock Parameterized approximation schemes using graph widths.
\newblock In \emph{{ICALP 2014} {(1)}}, volume 8572 of \emph{LNCS}, pages
  775--786. Springer, 2014.

\bibitem[Marx(2008)]{Marx08}
D.~Marx.
\newblock Parameterized complexity and approximation algorithms.
\newblock \emph{Comput. J.}, 51\penalty0 (1):\penalty0 60--78, 2008.

\bibitem[M{\"{o}}lle et~al.(2008)M{\"{o}}lle, Richter, and
  Rossmanith]{MolleRR08}
D.~M{\"{o}}lle, S.~Richter, and P.~Rossmanith.
\newblock Enumerate and expand: Improved algorithms for connected vertex cover
  and tree cover.
\newblock \emph{Theory Comput. Syst.}, 43\penalty0 (2):\penalty0 234--253,
  2008.

\bibitem[Niedermeier(2006)]{N06}
R.~Niedermeier.
\newblock \emph{Invitation to Fixed-Parameter Algorithms}.
\newblock Oxford University Press, 2006.

\bibitem[Niedermeier and Rossmanith(2003)]{Nied03}
R.~Niedermeier and P.~Rossmanith.
\newblock On efficient fixed-parameter algorithms for weighted vertex cover.
\newblock \emph{Journal of Algorithms}, 47\penalty0 (2):\penalty0 63 -- 77,
  2003.

\bibitem[Williamson and Shmoys(2011)]{WS11}
D.~P. Williamson and D.~B. Shmoys.
\newblock \emph{The Design of Approximation Algorithms}.
\newblock Cambridge University Press, New York, NY, USA, 1st edition, 2011.

\end{thebibliography}
\label{sec:biblio}

\end{document}